\documentclass{llncs}

\usepackage[lncs,amsthm,autoqed]{pamas}
\usepackage{arydshln}
\usepackage{wrapfig}
\usepackage{amsmath}
\usepackage{paralist}

\usepackage{enumitem,lipsum}
\usepackage{bm}

\usepackage{algorithm}
\usepackage{algorithmic}
\usepackage{nicefrac}
\usepackage{multirow}

\usepackage{tikz}
\usetikzlibrary{positioning,chains,fit,shapes,calc}
\usepackage{color,soul}

\usetikzlibrary{arrows}

\usepackage{enumitem,lipsum}

\newcommand\eat[1]{}

\setlength{\arraycolsep}{5pt}

\pagestyle{plain}

\usepackage{tikz}
\usetikzlibrary{arrows}

\usepackage{mathrsfs}
\usepackage{slashbox}

\usepackage{thmtools}
\usepackage{thm-restate}


\newcommand{\pref}{{\succ}} 

\newcommand{\man}{m}
\newcommand{\woman}{w}

\newcommand{\prob}{p}
\newcommand{\matching}{\mu}

\newcommand{\certain}{\text{cert}}

\newcommand{\mtext}[1]{\text{\normalfont #1}}

\newcommand{\midd}{\makebox[2ex]{$:$}}

\definecolor{kentuckyblue}{RGB}{0, 93, 170}			
\definecolor{green}{RGB}{0, 102, 0}					
\definecolor{frenchred}{RGB}{250,60,50}				
\definecolor{orange}{RGB}{255,127,0}             

\sloppy
\raggedbottom
\frenchspacing

\title{Stable Matching\\ with Uncertain Linear Preferences \protect\footnote{A preliminary version of this paper has been accepted for publication in the proceedings of the 9th International Symposium on Algorithmic Game Theory (SAGT 2016).}}

\author{%
Haris Aziz \inst{1} \inst{2}
\and P{\'e}ter Bir{\'o} \inst{3}
\and Serge Gaspers \inst{2} \inst{1}
\and Ronald de Haan \inst{4}
\and Nicholas Mattei \inst{1} \inst{2}
\and Baharak Rastegari \inst{5}
}
\institute{ %
	Data61 (formerly: NICTA), CSIRO, Sydney, Australia.
	\email{haris.aziz@data61.csiro.au}, \email{nicholas.mattei@data61.csiro.au}
	\and
	University of New South Wales, Sydney, Australia.
	\email{sergeg@cse.unsw.edu.au}
	\and
	Hungarian Academy of Sciences, Institute of Economics. \email{peter.biro@krtk.mta.hu}
	\and
	TU Wien, Vienna, Austria.
	\email{dehaan@ac.tuwien.ac.at}
	\and
	School of Computing Science, University of Glasgow, Glasgow, UK.
	\email{baharak.rastegari@glasgow.ac.uk}
}

\begin{document}

\maketitle

\begin{abstract}
    We consider the two-sided stable matching setting in which there may be uncertainty about the agents' preferences due to limited information or communication. We consider three models of uncertainty: (1) lottery model --- in which for each agent, there is a probability distribution over linear preferences, (2) compact indifference model --- for each agent, a weak preference order is specified and each linear order compatible with the weak order is equally likely and (3) joint probability model --- there is a lottery over preference profiles.
For each of the models, we study the computational complexity of computing the stability probability of a given matching as well as finding a matching with the highest probability of being stable. We also examine more restricted problems such as deciding whether a certainly stable matching exists.
We find a rich complexity landscape for these problems, indicating that the form uncertainty takes is significant.
\end{abstract}

\section{Introduction}
We consider a \emph{Stable Marriage problem (SM)}
 in which there is a set of men and a set of women.
Each man has a linear order over the women, and each woman has a linear order over the men. For the purpose of this paper we assume that the preference lists are complete, i.e., each agent finds each member of the opposite side acceptable.\footnote{We note that the complexity of all problems that we study are the same for complete and incomplete lists, where non-listed agents are deemed unacceptable---see Proposition \ref{complete} in Section \ref{sec:complete}. 
}
In the stable marriage problem the goal is to compute a \emph{stable matching}; a matching where no two agents prefer to be matched to each other rather than be matched to their current partners.
Unlike most of the literature on stable matching problems~\citep{GuIr89a,Manl13a,RoSo90a}, we assume that men and women may have uncertainty in their preferences which can be captured by various probabilistic uncertainty models. We focus on \emph{linear models} in which each possible deterministic preference
profile is a set of linear orders.

Uncertainty in preferences could arise for a number of reasons both practical or epistemological. For example, an agent could express a weak order because the agent did not invest enough time or effort to differentiate between potential matches and therefore one could assume that each linear extension of the weak order is equally likely; this maps to our \emph{compact indifference model}.
In many real applications the ties are broken randomly with lotteries, e.g., in the school choice programs in New York and Boston as well as in centralized college admissions in Ireland. However, a central planner may also choose a matching that is optimal in some sense, without breaking the ties in the preference list. For instance, in Scotland they used to compute the maximum size (weakly) stable matching to allocate residents to hospitals \cite{Manl13a}. We argue that another natural solution could be the matching which has the highest probability of being stable after conducting a lottery.
Alternatively, there may be a cost associated with eliciting preferences from the agents, so a central planner may want to only obtain and provide a recommendation based on a subset of the complete orders \cite{DrBo14a}.

As another example, imagine a group of interns are admitted to a company and allocated to different projects based on their preferences and the preferences of the project leaders. Suppose that after three months the interns can switch projects if the project leaders agree; though the company would prefer not to have swaps if possible. However, both the interns and the project leaders can have better information about each other after the three months, and the assignment should also be stable with regard to the refined preferences. This example motivates our lottery and joint probability models. In the \emph{lottery model}, the agents have \emph{independent} probabilities over possible linear orders (e.g.\ each project leader has a probability distribution on possible refined rankings over the interns independently from each other). In the \emph{joint probability model}, the probability distribution is over possible preference profiles and can thus accommodate the possibility that the preferences of the agents are refined in a correlated way (e.g.\ if an intern performs well in the first three months then she is likely to be highly ranked by all project leaders). Uncertainty in preferences has already been studied in voting~\citep{HAK+12a} and for cooperative games \cite{LC15}. Ehlers and Mass\'o \cite{EhlersMasso2015} considers many-to-one matching markets under a Bayesian setting. Similarly, in auction theory, it is standard to examine Bayesian settings in which there is a probability distribution over the types of agents.

To illustrate the problem we describe a simple example with four agents. We write $b~\pref_a c$ to say that agent $a$ prefers $b$ to $c$
and assume the lottery model.

\begin{example}\label{ex:1}
We have two men $\man_1$ and $\man_2$ and two women $\woman_1$ and $\woman_2$. Each agent assigns a probability to each strict preference ordering as follows.
\begin{inparaenum}[(i)]
	\item $\prob(\woman_1 \pref_{\man_1} \woman_2 )=0.4$ and $\prob(\woman_2 \pref_{\man_1} \woman_1 )=0.6$
	\item $\prob(\woman_1 \pref_{\man_2} \woman_2 )=0.0$ and $\prob(\woman_2 \pref_{\man_2} \woman_1 )=1.0$
	\item $\prob(\man_1 \pref_{\woman_1} \man_2 )=1.0$ and $\prob(\man_2 \pref_{\woman_1} \man_1 )=0.0$
	\item $\prob(\man_1 \pref_{\woman_2} \man_2 )=0.8$ and $\prob(\man_2 \pref_{\woman_2} \man_1 )=0.2$.
\end{inparaenum}
This setting admits two matchings that are stable with positive probability: $\matching_1 = \{(\man_1,\woman_1), (\man_2,\woman_2)\}$ and $\matching_2 = \{(\man_1,\woman_2), (\man_2,\woman_1)\}$. Notice that if each agent submits the preference list that s/he finds most likely to be true, then the setting admits a unique stable matching that is $\matching_2$. The probability of $\matching_2$ being stable, however, is $0.48$ whereas the probability of $\matching_1$ being stable is $0.52$.
\end{example}

\subsection{Uncertainty Models}

We consider three different uncertainty models:

\begin{itemize}

\item \textbf{Lottery Model}: For each agent, we are given a probability distribution over strict preference lists.

\item \textbf{Compact Indifference Model}: Each agent reports a single weak preference list that allows for ties. Each
complete linear order extension of this weak order is assumed to be equally likely.
\item  \textbf{Joint Probability Model}: A probability distribution over preference profiles is specified.
\end{itemize}

Note that for the Lottery Model and the Joint Probability Model the representation of the input preferences
can be exponentially large.  However, in settings where similar models of uncertainty are used, including resident matching \cite{DrBo14a} and voting \cite{HAK+12a}, a
limited amount of uncertainty (i.e. small supports) is commonly expected and observed in real world data.
Consequently, we consider
special cases
when the uncertainty is bounded in certain natural ways including the existence of only a small number of uncertain preferences and/or uncertainty on only one side of the market.

Observe that the compact indifference model can be represented as a lottery model.
This is a special case of the lottery model in which each agent expresses a weak order over the candidates (similar to the SMT setting \cite{GuIr89a,Manl13a}).
However, the lottery model representation can be exponentially larger than the compact indifference model; for an agent that is indifferent among $n$ agents on the other side of the market, there are $n!$ possible linearly ordered preferences.

\subsection{Computational Problems}

Given a stable marriage setting where agents have uncertain preferences, various natural computational problems arise.
Let \emph{stability probability} denote the probability that a matching is stable. We then consider the following two natural problems for each of our uncertainty models.

\begin{itemize}
    \item {\sc MatchingWithHighestStabilityProbability}:  Given uncertain preferences of the agents, compute a matching with the highest stability probability.
    \item {\sc StabilityProbability}: Given a matching and uncertain preferences of the agents, what is the stability probability of the matching?
\end{itemize}


\noindent
We also consider two specific problems that are simpler than {\sc StabilityProbability}: (1) {\sc IsStabilityProbabilityNon-Zero} --- For a given matching, is its stability probability non-zero? (2) {\sc IsStabilityProbabilityOne} --- For a given matching, is its stability probability one?

We additionally consider problems connected to, and more restricted than, {\sc MatchingWithHighestStabilityProbability}:
(1) {\sc ExistsCertainlyStableMatching} --- Does there exist a matching that has stability probability one?  (2) {\sc ExistsPossiblyStableMatching} --- Does there exist a matching that has non-zero stability probability?

%

Note that {\sc ExistsPossiblyStableMatching} is straightforward to answer for any of the three uncertainty models we consider here, since there exists a stable matching for each deterministic preference profile that is a possible realization of the uncertain preferences.

\begin{table}[tb]
    \centering
    \scalebox{0.9}{
    \begin{tabular}{lccc}
        \toprule
     &Lottery &Compact&Joint  \\
         Problems&Model&Indifference&Probability\\
         \midrule
\multirow{ 2}{*}{\sc StabilityProbability}&\#P-complete&?&in P\\
&\multicolumn{3}{c}{in P for all three models if 1 side is certain}\\
\midrule
{\sc IsStabilityProbabilityNon-Zero}&NP-complete&in P&in P\\
{\sc IsStabilityProbabilityOne}&in P&in P&in P\\
\midrule
{\sc ExistsPossiblyStableMatching}&in P&in P&in P\\
{\sc ExistsCertainlyStableMatching}&in P &in P&NP-complete\\
\midrule
 \multirow{ 3}{*}{\sc MatchingWithHighestStabilityProb}&?&NP-hard&NP-hard\\
 &\multicolumn{3}{c}{in P for all models if 1 side is certain and}\\ 
 &\multicolumn{3}{c}{there is $O(1)$ number of uncertain agents}\\
        \bottomrule
    \end{tabular}
    }
    \caption{\label{table:summary:uncertainmatch}Summary of results.
    }

\end{table}

\vspace{-1em}

\subsection{Results}
Table \ref{table:summary:uncertainmatch} summarizes our main findings. Note that the complexity of each problem is considered with respect to the input size, and that under the lottery and joint probability models the size of the input could be exponential in $n$, namely $O(n!\cdot 2n)$ for the lottery model and $O((n!)^{2n})$ for the joint probability model, where $n$ is the number of agents on either side of the market.

We point out that {\sc StabilityProbability} is \#P-complete for the lottery model even when each agent has at most two possible preferences, but in P if one side has certain preferences.
Additionally, we show that {\sc IsStabilityProbabilityNon-Zero} is in P for the lottery model if each agent has at most two possible preferences. Note that {\sc StabilityProbability} is open for the compact indifference model when both sides may be uncertain, and we also do not know the complexity of {\sc MatchingWithHighestStabilityProbility} in the lottery model, except when only a constant number of agents are uncertain on the same side of the market.

\section{Preliminaries}

In the Stable Marriage problem, there are two sets of agents. Let $M$ denote a set of $n$ men and $W$ a set of $n$ women. We use the term \emph{agents} when making statements that apply to both men and women, and the term \emph{candidates} to refer to the agents on the opposite side of the market to that of an agent under consideration. Each agent has a linearly ordered preference over the candidates. An agent may be uncertain about his/her linear preference ordering. 
Let $L$ denote the \emph{uncertain preference profile} for all agents. We denote by $I=(M,W,L)$ an instance of a \emph{Stable Marriage problem with Uncertain Linear Preferences (SMULP)}.

We say that a given uncertainty model is \emph{independent} if any uncertain preference profile $L$ under the model can be written as a product of uncertain preferences $L_a$ for all agents $a$, where all $L_a$'s are independent. Note that the lottery and the compact indifference models
are both independent, but the joint probability model is not.

A \emph{matching} $\mu$ is a pairing of men and women such that each man is paired with at most one woman and vice versa; defining a list of (man, woman) pairs $(m,w)$.  We use $\mu(m)$ to denote the woman $w$ that is matched to $m$ and $\mu(w)$ to denote the match for $w$. Given linearly ordered preferences, a matching is \emph{stable} if there is no pair $(m,w)$ not in $\mu$ where $m$ prefers $w$ to his current partner in $\mu$, i.e., $w \succ_m \mu(m)$, and vice versa. If such a pair exists, it constitutes a \emph{blocking pair}; as the pair would prefer to defect and match with each other rather than stay with their partner in $\mu$. Given an instance of SMULP, a matching is \emph{certainly stable} if it is stable with probability 1.

The following extensions of SM will come in handy in proving our results. The \emph{Stable Marriage problem with Partially ordered lists (SMP)} is an extension of SM in which agents' preferences are partial orders over the candidates.
The \emph{Stable Marriage problem with Ties (SMT)} is a special case of SMP in which incomparability is transitive and is interpreted as indifference. Therefore, in SMT each agent partitions the candidates into different ties (equivalence classes), is indifferent between the candidates in the same tie, and has strict preference ordering over the ties. In some practical settings some agents may find some candidates unacceptable and prefer to remain unmatched than to get matched to the unacceptable ones. \emph{SMP with Incomplete lists (SMPI)} and \emph{SMT with Incomplete lists (SMTI)} captures these scenarios where each agent's partially ordered list contains only his/her acceptable candidates. A matching is \emph{super-stable} in an instance of SMPI if it is stable w.r.t. all linear extensions of the partially ordered lists.

We define the \emph{certainly preferred} relation $\succ_a^{\certain}$ for agent $a$. We write $b \succ_a^{\certain} c$ if and only if agent $a$ prefers $b$ over $c$ with probability 1.  Based on the certainly preferred relation, we can define a dominance relation $D$:
$D_{m}(w)=\{w\}\cup \{w'\midd w'\succ_m^{\certain} w\}$;
$D_{w}(m)=\{m\}\cup \{m'\midd m'\succ_w^{\certain} m\}$.
Based on the notion of the dominance relation, we present a useful characterization of certainly stable matchings for independent uncertainty models.

\begin{restatable}{lemma} 
{lemCharact}\label{lemma:charac-certainly-stable}
A matching $\mu$ is certainly stable for an independent uncertainty model if and only if for each pair $\{m,w\}$, $\mu(m)\in D_m(w)$ or $\mu(w)\in D_w(m)$.
\end{restatable}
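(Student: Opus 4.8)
The plan is to unfold the definition of certain stability and compare it term by term with the dominance condition, handling the two directions of the biconditional separately; independence is only needed for one of them. Recall that since there are finitely many linear orders, ``stable with probability $1$'' is equivalent to ``stable in every deterministic realization (preference profile) that receives positive probability under $L$'', and that for an independent model $L=\prod_a L_a$ so a profile has positive probability iff each agent's order lies in the support of $L_a$; in particular the marginal of $L$ at agent $a$ is $L_a$, and $b\succ_a^{\certain}c$ exactly when every order in the support of $L_a$ ranks $b$ above $c$.

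For the forward direction (dominance condition $\Rightarrow$ certainly stable) I would fix an arbitrary positive-probability realization $R$ and show $\mu$ has no blocking pair in $R$. Take any pair $\{m,w\}$ with $(m,w)\notin\mu$; by hypothesis $\mu(m)\in D_m(w)$ or $\mu(w)\in D_w(m)$, say the former. Since $(m,w)\notin\mu$ we have $\mu(m)\neq w$, so membership in $D_m(w)=\{w\}\cup\{w'\midd w'\succ_m^{\certain}w\}$ forces $\mu(m)\succ_m^{\certain}w$. As $R$ has positive probability, $m$'s order in $R$ is in the support of $L_m$, hence ranks $\mu(m)$ above $w$; so $m$ does not prefer $w$ to $\mu(m)$ in $R$ and $\{m,w\}$ does not block. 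Pairs in $\mu$ never block, so $\mu$ is stable in $R$, and since $R$ was arbitrary, $\mu$ is certainly stable. (Note this direction does not actually use independence.)

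For the converse I would argue by contraposition: assume the dominance condition fails, so some pair $\{m,w\}$ has $\mu(m)\notin D_m(w)$ and $\mu(w)\notin D_w(m)$, and I would build a positive-probability realization in which $\{m,w\}$ blocks $\mu$. From $\mu(m)\notin D_m(w)$ we get $\mu(m)\neq w$ and $\mu(m)\not\succ_m^{\certain}w$; since every realized preference of $m$ is a total order on the candidates, the failure of $\mu(m)\succ_m^{\certain}w$ means $w$ is ranked above $\mu(m)$ with positive marginal probability, i.e.\ there is an order $\succ_m^{\ast}$ in the support of $L_m$ with $w\succ_m^{\ast}\mu(m)$; symmetrically there is $\succ_w^{\ast}$ in the support of $L_w$ with $m\succ_w^{\ast}\mu(w)$. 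Now I invoke independence: since $L=\prod_a L_a$, the profile assigning $\succ_m^{\ast}$ to $m$, $\succ_w^{\ast}$ to $w$, and any support order to every other agent has positive probability, and in it $\{m,w\}$ is a blocking pair, so $\mu$ is not certainly stable.

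The only genuinely delicate point is precisely the use of independence in the converse: the two ``locally bad'' orders $\succ_m^{\ast}$ and $\succ_w^{\ast}$ each have positive marginal probability, but to get a blocking realization they must co-occur in a single positive-probability profile, which is exactly what the product structure guarantees and what can fail for the joint probability model. The rest is bookkeeping --- reading $\succ_a^{\certain}$ through the marginal $L_a$, and using totality of linear orders to convert ``$\mu(m)\not\succ_m^{\certain}w$'' into ``$w\succ_m\mu(m)$ with positive probability''.
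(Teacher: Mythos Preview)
Your proof is correct and follows essentially the same approach as the paper's own argument: unfold the definitions of $\succ^{\certain}$ and $D$, and use independence to combine the two marginally-positive ``bad'' orders into a single positive-probability blocking realization. If anything your version is more careful than the paper's---you correctly isolate independence as needed only for the converse direction, and you spell out both implications explicitly, whereas the paper's brief proof leaves the forward direction largely implicit and writes ``or'' where it means ``and'' in the negated dominance condition.
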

\begin{proof} 
Assume that there exists a pair $\{m,w\}$ such that $\mu(m)\notin D_m(w)$ or $\mu(w)\notin D_w(m)$. Then, $m$ has non-zero probability of preferring $w$ over $\mu(m)$ and $w$ has non-zero probability of preferring $m$ over $\mu(w)$. But this means that $\mu$ has non-zero probability of not being stable.

Assume that a matching $\mu$ is certainly stable. Then no blocking pair $\{m,w\}$ has non-zero probability of forming.  This is only possible if the pair $\{m,w\}$ is part of the matching or one of $m$ and $w$ have zero probability of preferring the blocking $\{m,w\}$ over their current match in $\mu$.
 In either case, $\mu(m)\in D_m(w)$ or $\mu(w)\in D_w(m)$.
\end{proof}

\noindent
We point out that certainly preferred relation can be computed in polynomial time for all three models studied in this paper.

Certainly stable matchings are closely related to the notion of super-stable matchings 
\cite{GI89,Irv94}. In fact we can define a certainly stable matching using a terminology similar to that of super-stability. Given a matching $\mu$ and an unmatched pair $\{m,w\}$, we say that $\{m,w\}$ \emph{very weakly blocks (blocks)} $\mu$ if $\mu(m) \not\succ_m^{\certain} w$ and $\mu(w) \not\succ_w^{\certain} m$. The next claim then follows from Lemma \ref{lemma:charac-certainly-stable}.

\begin{proposition}\label{prop:no-blocking-pair}
A matching $\mu$ is certainly stable for an independent uncertainty model if and only if it admits no very weakly blocking pair.
\end{proposition}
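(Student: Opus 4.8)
The plan is to derive this directly from Lemma~\ref{lemma:charac-certainly-stable} by passing to contrapositives and unwinding the definition of the dominance relation $D$. Lemma~\ref{lemma:charac-certainly-stable} says that $\mu$ fails to be certainly stable exactly when there is some pair $\{m,w\}$ with $\mu(m)\notin D_m(w)$ \emph{and} $\mu(w)\notin D_w(m)$ --- the disjunction in the lemma becomes a conjunction under negation via De Morgan. So it suffices to show that the existence of such a pair is the same thing as the existence of a very weakly blocking pair.

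First I would unwind $\mu(m)\notin D_m(w)$. Since $D_m(w)=\{w\}\cup\{w'\midd w'\succ_m^{\certain} w\}$, we have $\mu(m)\notin D_m(w)$ if and only if $\mu(m)\neq w$ and $\mu(m)\not\succ_m^{\certain} w$; symmetrically, $\mu(w)\notin D_w(m)$ if and only if $\mu(w)\neq m$ and $\mu(w)\not\succ_w^{\certain} m$. Now observe that $\mu(m)\neq w$ holds if and only if $\mu(w)\neq m$, i.e.\ precisely when $\{m,w\}$ is an unmatched pair. Hence ``$\mu(m)\notin D_m(w)$ and $\mu(w)\notin D_w(m)$'' is equivalent to ``$\{m,w\}$ is unmatched, $\mu(m)\not\succ_m^{\certain} w$, and $\mu(w)\not\succ_w^{\certain} m$'', which is exactly the definition of $\{m,w\}$ being a very weakly blocking pair. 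Combining this equivalence with the contrapositive form of Lemma~\ref{lemma:charac-certainly-stable} yields the claim.

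This argument is essentially a definition chase, so I do not anticipate a genuine obstacle; the only points requiring a little care are (i) correctly negating the disjunction in Lemma~\ref{lemma:charac-certainly-stable}, and (ii) noting that the quantifier ``for each pair'' in the lemma can harmlessly be restricted to unmatched pairs, since for a matched pair $\{m,w\}\in\mu$ we have $\mu(m)=w\in D_m(w)$ and the lemma's condition is automatically met --- which is precisely why the very weakly blocking pairs in the proposition are required to be unmatched.
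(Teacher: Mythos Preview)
Your argument is correct and is exactly the intended route: the paper does not write out a proof at all, merely asserting that the proposition follows from Lemma~\ref{lemma:charac-certainly-stable}, and your contrapositive-plus-unwinding of the dominance relation is precisely the natural way to make that inference explicit. The two care points you flag (De Morgan on the disjunction, and the automatic satisfaction of the lemma's condition for matched pairs) are handled correctly.
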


%
\section{General Results}
\label{sec:certain}

In this section, we first show that the complexity of all problems that we study are the same for complete and incomplete lists. We then present some general results that apply to multiple uncertainty models. 
We show that {\sc ExistsCertainlyStableMatching} can be solved in polynomial time for any independent uncertainty model including lottery and compact indifference.
We then prove that, when the number of uncertain agents is constant and one side of the market is certain, we can solve {\sc MatchingWithHighestStabilityProbability} efficiently for each of the linear models.

\subsection{The Case for Incomplete Lists}\label{sec:complete}
The next proposition explains that our efficient algorithms described for the case of complete lists can be extended to incomplete lists. Additionally, our hardness proofs for incomplete lists can be transformed for complete lists. In fact, all
our hardness reductions, except Theorem~\ref{MatchingWithHighestStabilityProbability-compact}, are for complete lists so they trivially extend to the case of incomplete lists.

\begin{proposition}\label{complete}
The complexity of each computational problem studied in this paper are the same for complete and incomplete lists. Formally, if $I$ is a linear model with incomplete lists then we can construct an instance $I'$ with complete lists such that for each matching $\mu$ in $I$ there exists a corresponding matching $\mu'$ in $I'$ with $p(\mu,I)=p(\mu',I')$, such that $\mu$ can be obtained from $\mu'$ in polynomial time. Furthermore, $\mu$ is one of the most stable matchings in $I$ if and only if the corresponding matching $\mu'$ is one of the most stable matching in $I'$. Therefore a polynomial time algorithm solving {\sc StabilityProbability} or {\sc MatchingWithHighestStabilityProbability} for complete lists can be used to solve the same problem for incomplete lists in polynomial time.
\end{proposition}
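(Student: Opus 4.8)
The plan is to reduce the incomplete-list case to the complete-list case by the standard \emph{dummy-agent padding}, and then to verify that the padding is faithful with respect to stability probabilities. Given $I=(M,W,L)$ with incomplete lists, I would introduce for every man $m$ a dummy woman $d(m)$ and for every woman $w$ a dummy man $d(w)$, so that $I'$ has the $n$ real men together with the $n$ dummy men on one side and the $n$ real women together with the $n$ dummy women on the other; both sides of $I'$ then have size $2n$. Fixing once and for all an arbitrary strict order over all dummy men and an arbitrary strict order over all dummy women, the preferences of $I'$ are: each real man $m$ keeps his uncertain preference over his acceptable women, then ranks $d(m)$, then all remaining women in a fixed order that is the same in every realisation (real women symmetrically); and each dummy agent has one deterministic list ranking its ``owner'' first, then the dummies on the other side in the fixed order above, then everyone else arbitrarily. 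Because the appended portions are deterministic, this keeps the instance within the same uncertainty model (lottery, compact indifference, or joint probability), keeps it independent when it was, leaves all support sizes unchanged, and puts the realisations of $I'$ in probability-preserving bijection with those of $I$.

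Next I would set up the correspondence between matchings. Given a (possibly non-perfect) matching $\mu$ of $I$, let $\mu'$ keep all real pairs of $\mu$, match every real agent left unmatched by $\mu$ to its own dummy, and match the remaining (``leftover'') dummies by the unique stable matching of the common-preference sub-instance they induce --- concretely, pair the $t$-th leftover dummy man with the $t$-th leftover dummy woman according to the two fixed orders. Conversely, from any matching $\nu'$ of $I'$ one recovers $\nu$ in polynomial time by discarding the pairs that involve a dummy. A first easy observation: whenever $\nu'$ is stable in some realisation, it matches every real agent to an acceptable candidate or to its own dummy (else the owner-first dummy would block), so $\nu$ is then a genuine matching of $I$; and $\nu$ is also stable in that realisation, because any blocking pair of $\nu$ lifts to a blocking pair of $\nu'$ (an acceptable candidate is always preferred to one's own dummy). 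This already gives $p(\nu',I')\le p(\nu,I)\le \max_{\mu}p(\mu,I)$ for every matching $\nu'$ of $I'$.

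The core step is to prove $p(\mu',I')=p(\mu,I)$ for the canonical extension $\mu'$, i.e.\ that in every realisation $\mu'$ is stable exactly when $\mu$ is. I would first argue that $\mu'$ has \emph{no} blocking pair involving a dummy, in any realisation: a dummy matched to its owner is at its top choice and never blocks; a real agent strictly prefers both its acceptable partners and its own dummy to every foreign dummy, so it never blocks with a leftover dummy; and two leftover dummies never block each other, because their part of $\mu'$ was chosen to be stable inside their common-preference sub-instance. Then I would argue that a purely real pair blocks $\mu'$ if and only if it blocks $\mu$ --- forward, using that an agent unmatched by $\mu$ is matched to its own dummy in $\mu'$ and prefers any acceptable candidate to it; backward, using that being preferred to an acceptable partner forces acceptability. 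Hence $\mu'$ is stable iff $\mu$ is, giving $p(\mu',I')=p(\mu,I)$, and combining this with the inequality above yields $\max_{\nu'}p(\nu',I')=\max_{\mu}p(\mu,I)$ and the stated most-stable correspondence. A polynomial-time algorithm for \textsc{StabilityProbability} (resp.\ \textsc{MatchingWithHighestStabilityProbability}) on complete lists then solves it on incomplete lists, and the remaining problems (\textsc{IsStabilityProbabilityNon-Zero}, \textsc{IsStabilityProbabilityOne}, \textsc{ExistsCertainlyStableMatching}) follow immediately as threshold questions about these same quantities, with \textsc{ExistsPossiblyStableMatching} trivially answered ``yes''.

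The routine part will be the blocking-pair bookkeeping for real pairs. The main obstacle --- and the reason the dummy preferences must be ``owner first, then a globally fixed order'' rather than arbitrary --- is to guarantee that the leftover-dummy part of $\mu'$ can never form a blocking pair, no matter which real agents $\mu$ happens to leave unmatched; a secondary point requiring care is checking that the padding really does stay inside the same uncertainty model (and independent when applicable), since otherwise the transfer of complexity would be vacuous.
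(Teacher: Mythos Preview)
Your argument is correct. The construction and the blocking-pair bookkeeping all go through, including the crucial inequality $p(\nu',I')\le p(\nu,I)$ for \emph{arbitrary} $\nu'$ (not just those of the canonical form $\mu'$), which is exactly what is needed to transfer \textsc{MatchingWithHighestStabilityProbability}. One cosmetic point: you write ``both sides of $I'$ then have size $2n$'', implicitly assuming $|M|=|W|=n$; with incomplete lists the two sides may differ in size, but your construction still yields $|M|+|W|$ agents on each side, so nothing changes.

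The paper takes a leaner route. Instead of introducing a dedicated dummy per agent, it simply (i)~adds agents to the short side until $|M'|=|W'|$, (ii)~appends all previously unacceptable candidates to the bottom of each list in a fixed index order, and (iii)~extends $\mu$ to $\mu'$ by matching the unmatched agents of $\mu$ to one another in index order. This keeps the instance size essentially unchanged rather than doubling it. Your explicit-dummy gadget buys a cleaner case analysis: because each unmatched real agent is paired with its own top-ranked dummy, the ``no blocking pair involving a dummy'' and ``real pair blocks $\mu'$ iff it blocks $\mu$'' claims are immediate, whereas in the paper's construction one has to notice that a blocking pair among the newly matched agents (say two originally mutually acceptable agents both unmatched by $\mu$) already forces $p(\mu,I)=0$, a step the paper glosses over with ``obviously''. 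Both approaches rest on the same idea---push everything unacceptable strictly below a surrogate for ``being unmatched''---so the difference is one of packaging rather than substance.
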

\begin{proof}
In the case of complete lists we assumed that we have an equal number of men and women and everybody finds all candidates acceptable. When we consider the problem with incomplete lists we mean that the sizes of the two sets are not necessarily the same and not all the candidates are acceptable for the agents. However, we assume that in all realization of the preference profiles the same candidates are acceptable, so we only randomize on the preferences over the acceptable partners. Suppose that $I$ is an instance of a probabilistic model with incomplete lists with sets $M$ and $W$. Let us create the corresponding instance $I'$ with sets $M'$ and $W'$ in the following way. First we ensure that $|M'|=|W'|$ by adding enough agents to the short side of the market. Then we complete the preference lists of each agent by adding the previously unacceptable candidates to the end of her/his list according to a predetermined order, e.g. by the indices of the agents. Suppose now that $\mu$ is a matching in $I$ and $X$ is the set of matched men in $M$, whilst $\mu(X)=Y$. Let us create a corresponding matching $\mu'$ in $I'$ by extending $\mu$ with the unique stable matching for the subinstance restricted to the unmatched agents. Namely, let $\mu_u$ be the stable matching that matches $M'\setminus X$ to $W'\setminus Y$ in such a way that the $k$th pair contains the $k$th man and the $k$th woman from $M'\setminus X$ and $W'\setminus Y$, respectively according to their indices, and let $\mu'=\mu\cup\mu_u$. Now we claim that $p(\mu,I)=p(\mu',I')$. This is because there is no blocking pair in $(M'\setminus X)\times (W'\setminus Y)$, and any other pair is blocking for some preference profile in $I$ if and only if it is blocking for the corresponding preference profile in $I'$, obviously. Furthermore, it is also clear that among the extensions of $\mu$, $\mu'$ is the most stable one in $I'$. Therefore $\mu$ is one of the most stable matchings in $I$ if and only if the corresponding extension, $\mu'$ is one of the most stable matchings in $I'$. Thus an efficient algorithm for {\sc StabilityProbability} or {\sc MatchingWithHighestStabilityProbability} (or other subproblems) for complete lists can also be used to solve the same problems for the case of incomplete lists. This also implies that any hardness result proved for incomplete lists holds also for complete lists.
\end{proof}

\subsection{An Algorithm for the Lottery and Compact Indifference Models}
\label{sec:certain-lottery-compact}

\begin{restatable}{theorem}{thmTrans}\label{th:exists-matchwithprob1-p-bothsidestransitive-baharpeter}
For any independent uncertainty model in which the certainly preferred relation is transitive and can be computed in polynomial time, {\sc ExistsCertainlyStableMatching} can be solved in polynomial time. 
\end{restatable}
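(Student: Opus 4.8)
The plan is to reduce {\sc ExistsCertainlyStableMatching}, via the characterization of Proposition~\ref{prop:no-blocking-pair}, to the problem of deciding whether a super-stable matching exists in an auxiliary SMPI instance, and then to solve that problem in polynomial time. First, under the transitivity hypothesis each relation $\succ_a^{\certain}$ is a strict partial order: it is irreflexive and, by assumption, transitive. So the given SMULP instance $I$ induces an SMPI instance $I^{\certain}$ over the same men and women, in which everyone is mutually acceptable and the partially ordered list of each agent $a$ is exactly $\succ_a^{\certain}$; since the certainly preferred relation is polynomial-time computable, $I^{\certain}$ can be built in polynomial time. I claim that the certainly stable matchings of $I$ are precisely the super-stable matchings of $I^{\certain}$. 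For a fixed unmatched pair $\{m,w\}$, some linear extension of the lists of $I^{\certain}$ turns $\{m,w\}$ into a blocking pair if and only if $w$ can be ranked strictly above $\mu(m)$ in some extension of $m$'s list and, independently, $m$ strictly above $\mu(w)$ in some extension of $w$'s list; as the two extensions are chosen independently, this holds if and only if $\mu(m)\not\succ_m^{\certain} w$ and $\mu(w)\not\succ_w^{\certain} m$, i.e.\ if and only if $\{m,w\}$ very weakly blocks $\mu$. Hence $\mu$ is stable in every linear extension of $I^{\certain}$ exactly when it admits no very weakly blocking pair in $I$, which by Proposition~\ref{prop:no-blocking-pair} is exactly when $\mu$ is certainly stable. (Equivalently, this can be read directly off Lemma~\ref{lemma:charac-certainly-stable}, since under transitivity $D_m(w)$ is the up-set of $w$ in $\succ_m^{\certain}$.)

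It then remains to decide, in polynomial time, whether an SMPI instance with partially ordered preference lists admits a super-stable matching. I would do this with a proposal/deletion procedure generalizing Irving's super-stability algorithm for ties~\cite{Irv94}: repeatedly, each agent provisionally proposes to every candidate currently at the \emph{head} of its list (every remaining candidate with nothing strictly above it), and whenever a candidate $b$ receives a proposal from $a$, every candidate strictly below $a$ in $b$'s partial list is deleted from $b$'s list (and $b$ from their lists); iterate until no deletion is possible. If at termination some list is empty, report that no super-stable matching exists; otherwise check, by a bipartite matching computation, whether the reduced lists admit a perfect matching, output one if so, and report infeasibility otherwise. By Proposition~\ref{complete} we may assume complete lists in $I^{\certain}$ if convenient, although this is not needed.

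The main obstacle is the correctness of this procedure for \emph{partial orders} rather than ties, i.e.\ establishing the invariants that (i)~no deleted pair lies in any super-stable matching, so the search space is never wrongly pruned, and (ii)~at termination either some list is empty and no super-stable matching exists, or the reduced lists are nonempty and any perfect matching they admit is super-stable, with such a matching existing if and only if a super-stable matching exists. The argument parallels the ties case; the only genuine change is that the head of a list is now an antichain of maximal elements instead of a single indifference class, and one must verify that, upon a proposal from $a$ to $b$, deleting exactly the candidates strictly dominated by $a$ in $b$'s order is ``safe'', i.e.\ removes no pair that could participate in a super-stable matching. Given correctness, polynomiality is immediate: each deletion strictly shrinks a list, so the proposal phase runs in polynomial time, the final matching test is polynomial, and $I^{\certain}$ itself is computable in polynomial time by hypothesis; hence {\sc ExistsCertainlyStableMatching} is solvable in polynomial time for every such model, and in particular for the lottery and compact indifference models, whose certainly preferred relations are transitive.
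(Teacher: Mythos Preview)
Your reduction is exactly the paper's: build an SMP(I) instance whose preference lists are the certainly-preferred partial orders~$\succ_a^{\certain}$, and show that certainly stable matchings in~$I$ coincide with super-stable matchings in that instance. The equivalence argument you give via Proposition~\ref{prop:no-blocking-pair}/Lemma~\ref{lemma:charac-certainly-stable} matches the paper's ``if'' and ``only if'' parts.

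Where you diverge is the second half. The paper does \emph{not} develop an algorithm for super-stable matching with partially ordered lists; it simply invokes the algorithm \textsc{SUPER-SMP} of~\cite{RCIIL-ec14}, which decides in polynomial time whether an SMP instance admits a super-stable matching. So the ``main obstacle'' you flag---proving correctness of a proposal/deletion procedure when heads are antichains rather than ties---is already resolved in the literature, and the paper treats it as a black box. Your proposal is therefore on the right track but incomplete as written: you sketch the right kind of algorithm and correctly identify the delicate invariant (that deleting candidates strictly below a proposer in the receiver's partial order is safe), but you do not actually prove it, and this is precisely the non-trivial content of~\cite{RCIIL-ec14}. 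Once you cite that result, your proof collapses to the paper's.
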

\begin{proof} 
We prove this by reducing {\sc ExistsCertainlyStableMatching} to the problem of deciding whether an instance of SMP admits a super-stable matching or not. The latter problem can be solved in polynomial time using algorithm SUPER-SMP in \cite{RCIIL-ec14}.

Let $I=(M,W,L)$ be an instance of {\sc ExistsCertainlyStableMatching} under an independent uncertainty model, assuming that the certainly preferred relation is transitive and can be computed in polynomial time.
We construct an instance $I'=(M,W,p)$ of SMP, in polynomial time, as follows. The set of men and women are unchanged. To create the partial preference ordering $p_a$ for each agent $a$ we do the following. W.l.o.g. assume that $a$ is a man $m$. For every pair of women $w_1$ and $w_2$ (i) if $w_1 \succ_m^{certain} w_2$ then $(w_1,w_2)\in p_m$, denoting that $m$ (strictly) prefers $w_1$ to $w_2$ in $I'$, (ii) if $w_2 \succ_m^{certain} w_1$ then $(w_2,w_1)\in p_m$, denoting that $m$ (strictly) prefers $w_2$ to $w_1$ in $I'$. We claim, and show, that $I'$ admits a super-stable matching if and only if $I$ admits a matching with stability probability one.
  A matching $\mu$ is super-stable in $I'$ if and only if it does not admit a very weak blocking pair. A pair $(m,w)$ unmatched in $\mu$ is a very weak blocking pair if (i) $m$ either prefers $w$ to $\mu(m)$ or is indifferent between them, and (ii) $w$ either prefers $m$ to $\mu(w)$ or is indifferent between them. Agent $a$ is indifferent between agents $b$ and $c$ under an SMP instance if neither $(b,c)$ nor $(c,b)$ is in $p_a$. It is easy to verify that an unmatched pair $(m,w)$ in $I'$ is a very weak blocking pair in $\mu$ if and only if $(\mu(m),w) \notin p_m$ and $(\mu(w),m) \notin p_w$.


\emph{Only if part: If $I'$ admits a super-stable matching $\mu$ then $\mu$ is certainly stable in $I$.} Assume for a contradiction that $\mu$ is not certainly stable in $I$. It then follows Lemma \ref{lemma:charac-certainly-stable} that $\mu(m)\notin D_m(w)$ and $\mu(w)\notin D_w(m)$, implying that $\mu(m)\not\succ_m^{strict} w$ and $\mu(w)\not\succ_w^{strict} m$, and thus $(\mu(m),w) \notin p_m$ and $(\mu(w),m) \notin p_w$. Therefore $(m,w)$ blocks $\mu$ in $I'$, a contradiction.

\emph{If part: If $I$ admits a certainly stable matching $\mu$ then $\mu$ is super-stable in $I'$.} Assume, for a contradiction, that $\mu$ is not super-stable in $I'$. Therefore there exists a very weak blocking pair $(m,w)$, implying that $(\mu(m),w) \notin p_m$ and $(\mu(w),m) \notin p_w$, which in turn implies that $\mu(m)\not\succ_m^{strict} w$ and $\mu(w)\not\succ_w^{strict} m$. The latter statement, coupled with the fact that $m$ and $w$ are not matched together, implies that $\mu(m)\notin D_m(w)$ and $\mu(w)\notin D_w(m)$, and thus by  Lemma \ref{lemma:charac-certainly-stable} $\mu$ is not stable in $I$, a contradiction.
\end{proof}
%

\subsection{An Algorithm for a Constant Number of Uncertain Agents}

\begin{restatable}{theorem}{thmConstantUncertain}\label{thm:highestprobability-constant}
When the number of uncertain agents is constant and one side of the market is certain then {\sc MatchingWithHighestStabilityProbability} is polynomial-time solvable for each of the linear models. 
\end{restatable}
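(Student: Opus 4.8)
The plan is to guess the $O(1)$ partners of the uncertain agents and then reduce to computing a side-optimal stable matching in a derived deterministic instance. By Proposition~\ref{complete} we may assume complete lists, and, without loss of generality, that every woman has a certain preference and that $m_1,\dots,m_k$ (with $k=O(1)$) are the only possibly-uncertain men. Since every realisation of the preferences admits a stable matching, the highest achievable stability probability is positive, so it suffices to consider perfect matchings. I would guess the partners of the uncertain men, i.e., guess $w_i^\ast:=\mu(m_i)$ for each $i$; there are only $O(n^k)$ such guesses. Fix one. The key observation is that, for any matching $\mu$ with $\mu(m_i)=w_i^\ast$ for all $i$, the event ``$\mu$ is stable'' is the conjunction of (a) the \emph{deterministic} event $E_0(\mu)$ that $\mu$ has no blocking pair whose man-side member is a certain man, and (b) for each $i$, the event that $m_i$'s realised preference ranks $w_i^\ast$ above every woman of $B_i(\mu):=\{\,w\neq w_i^\ast:\ m_i\succ_w\mu(w)\,\}$. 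Hence $p(\mu)=[E_0(\mu)]\cdot\Pr(\text{for every }i,\ m_i\text{ ranks }w_i^\ast\text{ above }B_i(\mu))$, where $[E_0(\mu)]\in\{0,1\}$. This probability is monotone non-increasing in each $B_i(\mu)$ under inclusion: for the lottery and compact-indifference models it factors over $i$ (a sum over $m_i$'s support, respectively $1/(|B_i(\mu)\cap T_i|+1)$ for the matchings of interest, where $T_i$ is the indifference class of $m_i$ containing $w_i^\ast$), and for the joint model it is a weighted sum over the support; enlarging any $B_i$ only deletes terms. Moreover $B_i(\mu)$ shrinks (under inclusion) as the women's $\mu$-partners improve.

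Next I would capture the non-probabilistic requirements by a deterministic stable-marriage instance $J$ with incomplete lists on the $n-k$ certain men and the $n-k$ women other than $w_1^\ast,\dots,w_k^\ast$: delete the uncertain men from women's lists and the women $w_i^\ast$ from men's lists; delete from each certain man $m$'s list every woman he ranks below some $w_i^\ast$ with $m\succ_{w_i^\ast}m_i$ (so $(m,w_i^\ast)$ cannot block); and delete from each woman $w$'s list every man she ranks at or below $m_i$ whenever $w\succ_{m_i}^{\certain}w_i^\ast$ (if such a $w$ lay in $B_i(\mu)$ then $(m_i,w)$ would certainly block, so this truncation discards no matching of positive stability probability). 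A routine check then shows that the perfect matchings extending the guess with nonzero stability probability are exactly the perfect stable matchings of $J$, extended by the pairs $(m_i,w_i^\ast)$; by the Rural Hospitals theorem \cite{RoSo90a} either all stable matchings of $J$ are perfect or none is, and in the latter case the guess is discarded.

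Finally I would let $\mu_W$ be the woman-optimal stable matching of $J$ (by woman-proposing Gale--Shapley). Since $\mu_W$ gives every woman her best stable partner, it minimises, simultaneously for every $i$, the set $B_i(\,\cdot\,)$ under inclusion; hence, by the monotonicity above, $\mu:=\mu_W\cup\{(m_i,w_i^\ast):i\}$ has the largest stability probability among all matchings consistent with the guess. Its stability probability is computed in polynomial time (this is {\sc StabilityProbability} with one certain side, which is in P, and also follows from the closed forms above). The algorithm returns the best such $\mu$ over all $O(n^k)$ guesses, which is polynomial since $k$ is constant. For correctness, let $\mu^\ast$ be an optimal matching: for the guess $w_i^\ast=\mu^\ast(m_i)$, the matching $\mu^\ast$ respects every truncation defining $J$ (each follows from $p(\mu^\ast)>0$), hence is a perfect stable matching of $J$, so $\mu_W$ woman-dominates $\mu^\ast$, giving $B_i(\mu_W)\subseteq B_i(\mu^\ast)$ for all $i$ and $E_0(\mu_W)=E_0(\mu^\ast)=1$; thus $p(\mu_W\cup\{(m_i,w_i^\ast):i\})\ge p(\mu^\ast)$, and since no matching examined beats $p(\mu^\ast)$, the output is optimal.

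The step I expect to be the main obstacle is realising that one need not enumerate the exponentially many matchings (nor, for the compact-indifference model, the exponentially many preference realisations): the right moves are to guess only the $O(1)$ partners of the uncertain agents, to reduce what remains to a deterministic stable-marriage instance with incomplete lists, and to exploit that its woman-optimal stable matching is optimal against \emph{all} the blocking-threat sets $B_i$ simultaneously, together with the monotonicity of the stability probability in those sets --- which holds uniformly across all three uncertainty models. The delicate part is pinning down exactly which list truncations are forced and verifying the ``nonzero stability probability $\Leftrightarrow$ stable in $J$'' equivalence.
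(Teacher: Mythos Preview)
Your approach is correct and follows essentially the same route as the paper's proof: enumerate the $O(n^k)$ assignments of partners to the uncertain agents, build a truncated deterministic sub-instance on the remaining (certain) agents, take its woman-optimal stable matching, and argue via woman-domination/monotonicity of the blocking-threat sets that this extension maximises the stability probability for that guess (then evaluate it using the one-side-certain {\sc StabilityProbability} results). The only cosmetic differences are that you add a second, harmless truncation on the women's side for certainly-blocking pairs and invoke the Rural Hospitals theorem to handle infeasible guesses, whereas the paper truncates only the men's lists and uses a preliminary man-optimal feasibility check; the core idea and the key step---woman-optimality simultaneously minimises all the $B_i$'s---are identical.
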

\begin{proof} 
Let $I=(M,W,L)$ be an instance of {\sc MatchingWithHighestStabilityProbability} and let $X\subseteq M$ be the set of uncertain agents with $|X|=k$ for a constant $k$. We consider all the possible matchings between $X$ and $W$, where their total number is $K=n(n-1)\dots (n-k)$. Let $\mu_i$ be such a matching for $i\in\{1\dots K\}$. The main idea of the proof is to show that there exist an extension of $\mu_i$ to $M\cup W$ that has stability probability at least as high as any other extension of $\mu_i$. In this way we will need to compute this probability for only a polynomial number of matchings in $n$, that we can do efficiently for each model, and then compare them and select the one with the highest probability.

So we take a matching $\mu_i$ between sets $X$ and $W$. Let $Y=\mu_i(X)$ (i.e., the partners of $X$ in $W$) and let $M'=M\setminus X$ and $W'=W\setminus Y$. First, we compute the man-optimal matching $\mu_i^M$ for the sub-instance $I'$ on $M'\cup W'$, that can be done efficiently by the Gale-Shapley algorithm \cite{GaSh62a}. Now, if there exist a blocking pair $\{m',w\}$ involving some certain agents $m\in M'$ and $w\in Y$ for $\mu\cup \mu_i^M$ in $I$, then we can conclude that any extension of $\mu_i$ for $I$ will have zero probability of being stable. This is because any extension of $\mu_i$ for $I$ that has a positive probability of being stable must also be stable for the sub-instance $I'$. If $\{m',w\}$ is a blocking pair for $\mu_i^M$ then it will remain blocking for any extension of $\mu_i$ for $I$ as well, since $w$ has the same partner and the $m'$ cannot have a better partner either. Thus we can exclude the extensions of $\mu_i$ from the further consideration in this case.

Suppose now that there is no blocking pair of the form $\{m',w\}$, as explained above, for $\mu_i^M$ in $I$.
We truncate the preference lists of men in $M'$ in the following way. For each man $m'\in M'$ we remove all the women $w'\in W'$ from the list of $m'$ that are less preferred by $m'$ than some woman in $Y$ that finds $m'$ better than her partner in $\mu_i$.
That is, we remove $w'$ from the list of $m'$ if there exists $w\in Y$ such that $m'\succ_{w} \mu_i(w)$ and $w\succ_{m'}w'$. Let us denote the sub-instance for $M'\cup W'$ with the truncated lists as $I_i^r$. Now we compute the woman-optimal matching, $\mu_i^W$ in $I_i^r$. Let $\mu_i^*=\mu_i\cup \mu_i^W$ be the extended matching in $I$. This is stable for the certain agents by the construction. 

Finally, we will show that for any matching $\mu_i'$, that is an extension of $\mu_i$ to $I$, the stability probability of $\mu_i'$ is less than, or equal to, the stability probability of $\mu_i^*$. If $\mu_i'$ is not stable for the certain agents then $\mu_i'$ has zero probability of being stable, thus the statement holds. If $\mu_i'$ is stable for the certain agents then it must also be stable in $I_i^r$, and each woman in $W'$ weakly prefers her partner in $\mu_i^*$ to her partner in $\mu_i'$, since she gets her optimal stable partner for $I_i^r$ in $\mu_i^*$. Therefore, if $\mu_i'$ is stable under a preference profile then $\mu_i^*$ will also be stable, so the statement follows.
 Thus, there remain only a polynomial number ($K$) of candidate matchings in $n$ for which we have to compute the probabilities. {\sc StabilityProbability}  is polynomial-time solvable for all the three models we consider given that one side has certain preferences, as described in Theorems \ref{thm:lottery-certain-stabilityprobability-poly}, \ref{StabilityProbability-compact}, and \ref{thm:joint-stabilityprobability-poly}.
\end{proof}

\section{Lottery Model}

In this section we focus on the lottery model. 

\begin{restatable}{theorem}{thmCertstabLot}\label{thm:lottery-certain-stabilityprobability-poly}
For the lottery model, if one side has certain preferences,  {\sc StabilityProbability}  is polynomial-time solvable.
\end{restatable}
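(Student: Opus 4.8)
The plan is to exploit the fact that, once one side of the market is certain, the event ``$\mu$ is stable'' factorizes into independent events, one per uncertain agent, each having a particularly simple form. Assume without loss of generality that the women have certain preferences, so the men form the (possibly) uncertain side; the other case is symmetric. Since lists are complete (see Proposition~\ref{complete}), $\mu$ is a perfect matching, and a pair $\{m,w\}$ with $w\neq\mu(m)$ is a blocking pair in a given realization of the men's preferences if and only if $m\succ_w\mu(w)$ and $w\succ_m\mu(m)$. The first condition depends only on $w$'s preference, which is known with certainty, so I would first compute, for each man $m$, the set $B_m=\{\,w:\ w\neq\mu(m)\ \text{and}\ m\succ_w\mu(w)\,\}$ of women who are willing to deviate with $m$; this takes polynomial time, one lookup in a fixed list per pair.

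Next I would observe that, in a given realization, $\mu$ is stable if and only if every man $m$ prefers $\mu(m)$ to all women in $B_m$ (equivalently, $\mu(m)$ is ranked above every woman of $B_m$ in $m$'s list). Because in the lottery model the men draw their lists independently,
\[
\Pr[\mu\text{ stable}]\;=\;\prod_{m\in M}\Pr\big[\,\mu(m)\succ_m w\ \text{for all }w\in B_m\,\big],
\]
where the factor for a man with $B_m=\emptyset$ is $1$. Each factor can be computed directly: since the lottery model supplies $L_m$ as an explicit list of (preference list, probability) pairs, I would iterate over the support of $L_m$ and sum the probabilities of those lists in which $\mu(m)$ beats every woman of $B_m$. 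The number of lists in the support and the cost of testing each are polynomial in the size of the input $L$, so each factor, and hence the product of the $n$ factors, is computed in polynomial time.

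The argument has no genuinely hard step; it rests entirely on the structural observation that certainty on one side makes each woman's half of every potential blocking pair deterministic, which simultaneously decouples the men and collapses each man's surviving stability requirement to ``$\mu(m)$ beats the fixed set $B_m$'' --- a quantity one can read off the explicitly listed lottery support. The two points to keep in mind are that the reduction must also be stated for the symmetric case in which the men are the certain side, and that this polynomial bound genuinely uses the explicit-support lottery encoding, so it is consistent with the open status of {\sc StabilityProbability} for the compact indifference model: translating a compact indifference instance into an equivalent lottery instance can incur an exponential blow-up.
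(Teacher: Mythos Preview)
Your proof is correct and follows essentially the same approach as the paper: exploit certainty on one side to precompute, for each uncertain agent, the fixed set of potential blocking partners, then use independence across uncertain agents to write the stability probability as a product of per-agent factors, each computable by summing over the explicit lottery support. The only cosmetic difference is that the paper fixes the men as the certain side while you fix the women, and you make the factorization and the sets $B_m$ slightly more explicit than the paper does.
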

\begin{proof} 
Without loss of generality, assume that men have certain preferences.
The following procedure gives us the stability probability of $\mu$ for any given $\mu$. (1) For each uncertain woman $w$ identify those preferences that allow her not to form a blocking pair. We can do this in polynomial time as men have strict preferences and therefore for each preference ordering of $w$ we only need to look up the (one and only) preference ordering of each $m$ who $w$ prefers to $\mu(w)$. (2) For each uncertain woman $w$, add up the probabilities of all preference orderings that pass the test in the first step. (3) multiply the added-up probabilities for all $w$ obtained in step (2).
\end{proof}

\begin{restatable}{theorem}{thmIsCertstabLot}\label{th:IsStabilityProbabilityOne-lottery}
For the lottery model, {\sc IsStabilityProbabilityOne} can be solved in linear time.
\end{restatable}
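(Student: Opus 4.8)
The plan is to use the characterization of certainly stable matchings from Lemma~\ref{lemma:charac-certainly-stable} (equivalently Proposition~\ref{prop:no-blocking-pair}), which applies here since the lottery model is an independent uncertainty model. A matching $\mu$ has stability probability one if and only if for every unmatched pair $\{m,w\}$ we have $\mu(m) \succ_m^{\certain} w$ or $\mu(w) \succ_w^{\certain} m$. So the task reduces to checking, for the given $\mu$, whether any unmatched pair fails this test, i.e., whether $\mu$ admits a very weakly blocking pair.

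First I would observe that in the lottery model the certainly preferred relation can be read off cheaply: $b \succ_a^{\certain} c$ holds exactly when $b$ precedes $c$ in \emph{every} preference list in the support of $a$'s lottery. In particular, for a fixed agent $a$ and a fixed candidate $c = \mu(a)$, the set $D_a(c) = \{c\} \cup \{c' : c' \succ_a^{\certain} c\}$ is precisely the set of candidates that lie above (or equal to) $\mu(a)$ in all of $a$'s lists. I would compute, for each agent $a$, the set $D_a(\mu(a))$ — the candidates that $a$ certainly weakly prefers to the partner assigned by $\mu$. Then $\mu$ has stability probability one iff for every pair $\{m,w\}$ with $\mu(m) \neq w$, either $w \in D_m(\mu(m)) \setminus \{\mu(m)\}$ or $m \in D_w(\mu(w)) \setminus \{\mu(w)\}$; equivalently, iff there is no pair $\{m,w\}$ with $w \notin D_m(\mu(m))$ and $m \notin D_w(\mu(w))$.

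To get linear time, I would avoid iterating over all $\Theta(n^2)$ pairs. Instead, for each man $m$, let $B_m = W \setminus D_m(\mu(m))$ be the set of women whom $m$ does \emph{not} certainly weakly prefer to $\mu(m)$; symmetrically define $B_w$ for each woman. A very weakly blocking pair is exactly a pair $\{m,w\}$ with $w \in B_m$ and $m \in B_w$. Now note $\mu$ fails to have probability one iff some $m$ has a woman $w \in B_m$ with $m \in B_w$; it suffices to check, for each $m$, whether $B_m$ contains any such $w$ — but doing this naively is still quadratic. The key point is that membership $w \in B_w$ depends only on $w$ and her own (fixed) partner, so one can precompute in linear total time the set $\mathcal{B} = \{w : m' \in B_w \text{ for } m' = \text{the man } m'\}$... more carefully: precompute for each woman $w$ the set $B_w$, and for each man $m$ the set $B_m$, using the input lottery representation; the total size of these sets across all agents, when the lists are complete, is $O(n^2)$, which is linear in the input size (the input already contains, per agent, lists over the $n$ candidates). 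Then scanning over all man–woman pairs once more is also $O(n^2)$, hence linear in the input. I would spell out that the certainly preferred relation, and thus each $D_a(\mu(a))$, is obtained by a single pass over the support of each agent's lottery, tracking the minimum rank position each candidate attains, which is linear in the input description.

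The main obstacle, and the point to be careful about, is precisely the meaning of ``linear time'' here: the input in the lottery model can be as large as $O(n! \cdot 2n)$, so ``linear'' must mean linear in the \emph{input size}, not in $n$. I would therefore make explicit that computing $\succ_a^{\certain}$ requires one sweep over each agent's list of supported orderings (each of length $n$), which is linear in the size of that agent's part of the input; and that the subsequent pair-checking, being $O(n^2)$, is dominated by — or at worst equal to — the size of the explicitly given preference data (which for complete lists already has $\Theta(n)$ entries per ordering and at least one ordering per agent, and the comparison phase is subsumed once we have the $D_a$ sets represented as bit-arrays of length $n$). Assembling these observations gives a single-pass algorithm whose running time is linear in $|I|$, proving the claim.
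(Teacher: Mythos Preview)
Your high-level approach coincides with the paper's: both reduce the question to scanning for an unmatched pair $\{m,w\}$ that can block with positive probability, i.e., a pair for which neither $\mu(m) \succ_m^{\certain} w$ nor $\mu(w) \succ_w^{\certain} m$ holds, and both observe that this scan is linear in the input size. Your justification of the linear-time claim is in fact more detailed than the paper's, which is quite terse.

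There is, however, a concrete direction error in your implementation. You precompute $D_a(\mu(a)) = \{\mu(a)\} \cup \{c : c \succ_a^{\certain} \mu(a)\}$, the candidates that $a$ certainly weakly prefers \emph{to} the assigned partner, and then assert that $\mu$ is certainly stable iff for every unmatched pair either $w \in D_m(\mu(m))\setminus\{\mu(m)\}$ or $m \in D_w(\mu(w))\setminus\{\mu(w)\}$. But $w \in D_m(\mu(m))\setminus\{\mu(m)\}$ says $w \succ_m^{\certain} \mu(m)$, i.e., $m$ certainly prefers $w$ to his current partner --- this is exactly the condition that \emph{guarantees} $m$ wants to deviate to $w$, not the one that rules it out. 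The characterization you correctly quoted at the outset requires $\mu(m) \succ_m^{\certain} w$ (or the analogous condition on $w$'s side), which is the opposite inequality. The set you should precompute for each agent $a$ is $\{c : \mu(a) \succ_a^{\certain} c\}$, the candidates certainly ranked \emph{below} $\mu(a)$ in every list; equivalently, the ``dangerous'' set is $U_a = \{c : \text{not } \mu(a) \succ_a^{\certain} c\}$, and a very weakly blocking pair is one with $w \in U_m$ and $m \in U_w$. Computing each $U_a$ is still a single pass over $a$'s lottery (for each list, mark every candidate above $\mu(a)$), so with this correction your linear-time analysis stands and the argument matches the paper's.
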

\begin{proof} 
The problem is equivalent to checking whether the given matching $\mu$ has non-zero probability of \emph{not} being stable. This can be checked as follows. For each possible pair of agents $\{m,w\}$ that are not matched to each other, we check whether they can form a blocking pair with non-zero probability. For this, we just need to check whether $m$ prefers $w$ in some possible preference over $\mu(m)$ and whether $w$ prefers $m$ in some possible preference over $\mu(w)$.
\end{proof}

\begin{theorem}
For the lottery model, {\sc IsStabilityProbabilityNon-Zero} is polynomial-time solvable when each agent has at most two possible preference orderings.
\end{theorem}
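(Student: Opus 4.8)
The plan is to reduce the problem to $2$-SAT, which is solvable in linear time. The starting point is that, because the lottery model is independent, a realization of the preference profile has positive probability exactly when each agent is assigned one of the (at most two) orders in the support of its distribution, and these choices can be made independently across agents. Hence $\mu$ has non-zero stability probability if and only if there is a way to pick, for each uncertain agent, one of its (at most two) possible orders so that the resulting profile admits no pair blocking $\mu$.

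Next I would set up a Boolean encoding. For each uncertain agent $a$ (one with exactly two orders in its support) introduce a variable $x_a$, with the convention that $x_a$ true selects $a$'s first possible order and $x_a$ false selects the second; a certain agent contributes no variable. Fix an unmatched pair $\{m,w\}$. The condition ``$m$ prefers $w$ to $\mu(m)$'' is determined by $m$'s order alone; depending on whether it holds in both, exactly one, or neither of $m$'s two possible orders (or, for a certain $m$, in its unique order), it is equivalent to one of the constants $\bot,\top$ or to one of the literals $x_m,\neg x_m$. Write $\alpha_{m,w}$ for this literal-or-constant, and define $\beta_{w,m}$ analogously for ``$w$ prefers $m$ to $\mu(w)$''. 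Then $\{m,w\}$ blocks $\mu$ precisely when $\alpha_{m,w}\wedge\beta_{w,m}$ holds, so forbidding this blocking pair is the clause $\neg\alpha_{m,w}\vee\neg\beta_{w,m}$, which after simplifying constants has at most two literals: if one side is $\bot$ the clause is vacuous and may be dropped, and if both sides are $\top$ it is the empty clause (witnessing that $\mu$ is never stable). Taking the conjunction over all $O(n^2)$ unmatched pairs yields a 2-CNF formula $\varphi$ over $O(n)$ variables, constructible in polynomial time, since each $\alpha_{m,w},\beta_{w,m}$ comes from a constant number of pairwise comparisons within the agents' (at most two) explicitly given orders, which can be read off efficiently.

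Finally I would argue the equivalence and conclude: by construction, a satisfying assignment of $\varphi$ corresponds to a positive-probability realization under which $\mu$ has no blocking pair, and conversely any such realization yields a satisfying assignment; therefore $\mu$ has non-zero stability probability if and only if $\varphi$ is satisfiable, which is decidable in linear time. I do not expect a genuine obstacle here: the reduction is routine, and the only points requiring care are the case analysis that converts each ``prefers'' statement into the correct literal or constant, and the correct treatment of pairs that block under all realizations (empty clause) or under none (vacuous clause).
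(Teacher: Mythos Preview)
Your proposal is correct and follows essentially the same approach as the paper: both reduce to $2$-SAT by exploiting that each potential blocking pair $\{m,w\}$ depends only on $m$'s and $w$'s choices, yielding a clause with at most two literals. The only cosmetic difference is that the paper introduces two Boolean variables per agent (one per possible order) together with exactly-one clauses, whereas you use a single variable per uncertain agent; your encoding is slightly leaner but the idea is identical.
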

\begin{proof}
The problem is to decide whether there is some preference ordering for each agent (among the ones in their lottery) such that the given matching is stable. If each agent has at most two possible preference orderings in their lottery, we can reduce the problem to an instance~$\varphi$ of 2SAT, as follows.

Let~$\{ a_1,\dotsc,a_n \}$ and~$\{ b_1,\dotsc,b_n \}$ be the two sets of agents. Moreover, for each agent~$c$ and each~$i \in \{ 1,2 \}$, let~$\mtext{pref}(c,i)$ denote the $i$-th preference in the lottery for agent~$c$.

We introduce a propositional variable for each preference~$\mtext{pref}(c,i)$---which we also call~$\mtext{pref}(c,i)$. Intuitively, these variables indicate which preference for the agents we choose to make the matching stable.

For each agent~$c$, we add the following clauses to~$\varphi$, to ensure that for each agent~$c$ there is exactly one preference that is selected:
$ (\mtext{pref}(c,1) \vee \mtext{pref}(c,2))
\quad\wedge\quad
(\neg \mtext{pref}(c,1) \vee \neg \mtext{pref}(c,2)). $

Then, we add clauses to ensure that the selected matching is stable. For each agent~$c$ and each~$i \in \{ 1,2 \}$, let~$B_{c,i}$ be the set of preferences~$\mtext{pref}(c',i')$---for~$c' \neq c$ and~$i' \in \{ 1,2 \}$---such that~$\mtext{pref}(c,i)$ and~$\mtext{pref}(c',i')$ together lead to the given matching being unstable (with $(c,c')$ being a blocking pair). Then, for each~$c,i$, we add the following clauses:
$ (\neg \mtext{pref}(c,i) \vee \neg \mtext{pref}(c',i'))
\quad \mtext{for each~$\mtext{pref}(c',i') \in B_{c,i}$.} $

The given matching is then stable if and only if~$\varphi$ is satisfiable. Since~$\varphi$ is a 2CNF, this can be decided in linear time.
\end{proof}

\begin{theorem}
For the lottery model, {\sc StabilityProbability} is \#P-complete, even when each agent has at most two possible preferences.
\end{theorem}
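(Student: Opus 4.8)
The plan is to prove both membership in \#P and \#P-hardness. Membership in \#P is relatively routine: a matching $\mu$ is stable under a realized profile if and only if no unmatched pair blocks, which is a polynomial-time checkable condition given the (polynomially many) preference choices of the agents. In the lottery model an agent with at most two possible preferences contributes a single bit, so we can normalize the probabilities to a common denominator (a product of the per-agent denominators, computable in polynomial time) and count the number of bit-assignments under which $\mu$ is stable, weighted appropriately; this places {\sc StabilityProbability} in \#P. The bulk of the work is the hardness direction.

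For \#P-hardness I would reduce from a known \#P-complete problem such as \#2SAT (counting satisfying assignments of a 2CNF formula) or, more conveniently, a monotone/antimonotone variant, exploiting the structure already visible in the 2SAT reduction of the preceding theorem. The idea is to make each agent a ``variable gadget'' with exactly two possible preference orderings, so that choosing one ordering versus the other encodes setting a Boolean variable to true or false. The target matching $\mu$ is designed so that a pair $\{c,c'\}$ blocks $\mu$ precisely when the two agents' chosen orderings correspond to a clause being violated; thus $\mu$ is stable under a realized profile exactly when the encoded assignment satisfies the formula. If every agent's two orderings are equiprobable (probability $1/2$ each) and there are $N$ agents, then $\mathrm{StabilityProbability}(\mu) = (\#\text{satisfying assignments})/2^{N}$, so computing the stability probability exactly yields the count, giving the reduction. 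The subtlety is that blocking is a symmetric relation between the two sides of the market (a man--woman pair), so the variable gadgets and clause gadgets must be laid out across the two sides carefully; I would use the ``incomplete list'' freedom granted by Proposition~\ref{complete} to localize which pairs can possibly block, then introduce dummy agents and fixed preferences to absorb the rest of the matching so that the only sources of instability are the intended clause-violation pairs.

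The main obstacle I anticipate is precisely this gadget design: arranging that (i) each uncertain agent genuinely has only two possible orderings, (ii) the set of potential blocking pairs is exactly the set of ``clause-checking'' pairs and nothing else blocks regardless of the realization, and (iii) the correspondence between orderings and truth values is consistent across all clauses in which a variable appears (so a single agent cannot ``cheat'' by effectively using different truth values for different clauses). Item (iii) typically forces one to replicate each variable into several copies linked by equality-enforcing gadgets, or to route all occurrences of a variable through a single agent whose one binary choice simultaneously affects all the relevant blocking conditions; getting the probabilities to stay uniform (or at least to combine into a clean denominator) through this replication is the delicate bookkeeping step. Once the gadget is in place, the final accounting — that the stability probability equals the normalized solution count — is immediate, and \#P-completeness follows by combining this with the \#P membership established at the start.
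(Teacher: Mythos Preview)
Your plan is correct and coincides with the paper's approach: reduce from \#2SAT, make each variable an agent with two equiprobable orderings, pair every variable-agent in $\mu$ with a perfectly-happy dummy, and arrange that a pair blocks exactly when the corresponding clause is violated, so that the stability probability equals $s/2^{N}$.

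The one concrete step you have not committed to---and which dissolves your obstacle~(iii) without any replication or equality gadgets---is a preprocessing of the formula. The paper rewrites $\varphi$ over $x_1,\dots,x_n$ into an equicount $\varphi'$ over $x_1,\dots,x_n,y_1,\dots,y_n$: add the clauses $(\neg x_i\vee y_i)$ and $(\neg y_i\vee x_i)$ to force $y_i\leftrightarrow x_i$, and in every original clause replace one occurrence of some $x_j$ by $y_j$, so that every clause of $\varphi'$ contains exactly one $x$-literal and one $y$-literal (and, after an easy further normalization, no two variables share more than one clause). Now the $x$-agents sit on one side of the market and the $y$-agents on the other, each matched in $\mu$ to a dummy who ranks it first; the two orderings $p_{x_i}$ and $p_{\neg x_i}$ of a single agent $x_i$ differ only in \emph{which} of the $y_j$'s are placed above its dummy partner $b_i$ versus below, determined by the clauses of $\varphi'$ containing $x_i$ or $\neg x_i$ respectively. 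Thus one binary choice per variable-agent governs all of that variable's clause occurrences simultaneously, no consistency gadgets are needed, every agent genuinely has at most two orderings, and the stability probability is exactly $s/2^{2n}$.
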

\begin{proof}
    We show how to count the number of satisfying assignments for
    a 2CNF formula using the problem {\sc StabilityProbability} for the
    lottery model where each agent has two possible preferences.
    Since this problem is \#P-hard, we get \#P-hardness
    also for {\sc StabilityProbability}.

    Let~$\varphi$ be a 2CNF formula over the variables~$x_1,\dotsc,x_n$.
    We firstly transform~$\varphi$ to a 2CNF formula~$\varphi'$ over the
    variables~$x_1,\dotsc,x_n,y_1,\dotsc,y_n$ that has exactly the same
    number of satisfying assignments, and that satisfies the property
    that each clause contains one variable among~$x_1,\dotsc,x_n$
    and one variable among~$y_1,\dotsc,y_n$.
    We do so as follows.
    Firstly, for each~$1 \leq i \leq n$, we add the
    clauses~$(\neg x_i \vee y_i)$ and~$(\neg y_i \vee x_i)$, ensuring that
    in each satisfying assignment the variables~$x_i$ and~$y_i$ get
    assigned the same truth value.
    Then, for each clause of~$\varphi$, we replace one occurrence of a
    variable among~$x_1,\dotsc,x_n$ by the corresponding variable
    among~$y_1,\dotsc,y_n$, and we add the resulting clause to~$\varphi'$.
    For example, if~$\varphi$ contains the clause~$(x_1 \vee \neg x_3)$,
    we would add the clause~$(x_1 \vee \neg y_3)$ to~$\varphi'$.
    It is readily verified that~$\varphi'$ has the same number of satisfying
    assignments as~$\varphi$.

    Moreover, we may assume without loss of generality that for
    any two variables of~$\varphi'$, there is at most one clause of~$\varphi'$
    that contains both of these variables.
    If in~$\varphi$ there are two variables~$x_1$ and~$x_2$
    and clauses~$(x_1 \vee x_2)$ and~$(\neg x_1 \vee \neg x_2)$,
    for instance, we can construct~$\varphi'$ to contain the
    clauses~$(x_1 \vee y_2)$ and~$(\neg y_1 \vee \neg x_2)$.

    We now construct an instance of {\sc StabilityProbability}.
    The sets of agents that we consider
    are~$\{ x_1,\dotsc,x_n, a_1,\dotsc,a_n \}$
    and~$\{ y_1,\dotsc,y_n,b_1,\dotsc,b_n \}$.
    The matching that we consider matches~$x_i$ to~$b_i$
    and matches~$y_i$ to~$a_i$, for each~$1 \leq i \leq n$.
    This is depicted below.
    Each agent~$b_i$ has only a single possible preference,
    namely one where they prefer~$x_i$ over all other agents.
    Similarly, each agent~$a_i$ has a single possible preference
    where they prefer~$y_i$ over all other agents.
    In other words, the agents~$a_i$ and~$b_i$ are perfectly
    happy with the given matching.
\begin{figure}[h]
\centering
\begin{tikzpicture}[xscale=0.8,yscale=0.7]
  \node[label=above:{$b_1$}] (a1) at (0,0) {$\bullet$};
  \node[label=below:{$x_1$}] (b1) at (0,-1) {$\bullet$};
  \path[draw,-] (a1) -- (b1);
  \node[label=above:{$b_2$}] (a2) at (1,0) {$\bullet$};
  \node[label=below:{$x_2$}] (b2) at (1,-1) {$\bullet$};
  \path[draw,-] (a2) -- (b2);
  \node[label=above:{$b_3$}] (a3) at (2,0) {$\bullet$};
  \node[label=below:{$x_3$}] (b3) at (2,-1) {$\bullet$};
  \path[draw,-] (a3) -- (b3);
  \node[] at (3,-0.5) {$\dotsm$};
  \node[label=above:{$b_{n}$}] (an) at (4,0) {$\bullet$};
  \node[label=below:{$x_{n}$}] (bn) at (4,-1) {$\bullet$};
  \path[draw,-] (an) -- (bn);

  \node[label=above:{$y_1$}] (a1') at (6,0) {$\bullet$};
  \node[label=below:{$a_1$}] (b1') at (6,-1) {$\bullet$};
  \path[draw,-] (a1') -- (b1');
  \node[label=above:{$y_2$}] (a2') at (7,0) {$\bullet$};
  \node[label=below:{$a_2$}] (b2') at (7,-1) {$\bullet$};
  \path[draw,-] (a2') -- (b2');
  \node[label=above:{$y_3$}] (a3') at (8,0) {$\bullet$};
  \node[label=below:{$a_3$}] (b3') at (8,-1) {$\bullet$};
  \path[draw,-] (a3') -- (b3');
  \node[] at (9,-0.5) {$\dotsm$};
  \node[label=above:{$y_{n}$}] (an') at (10,0) {$\bullet$};
  \node[label=below:{$a_{n}$}] (bn') at (10,-1) {$\bullet$};
  \path[draw,-] (an') -- (bn');

  %
  %
\end{tikzpicture}
\label{fig:lottery-reduction-2}
\end{figure}
    The agents~$x_i$ and~$y_i$ each have two possible preferences,
    that are each chosen with probability~$\frac{1}{2}$.
    These two possible preferences are associated with setting these
    variables to true or false, respectively.
    We describe how these preferences are constructed for the
    agents~$x_i$. The construction for the preferences of the
    agents~$y_i$ is then entirely analogous.

    Take an arbitrary agent~$x_i$.
    We show how to construct the two possible preferences for agent~$x_i$,
    which we denote by~$p_{x_i}$ and~$p_{\neg x_i}$.
    Both of these possible preferences are based on the following
    partial ranking:
    $ b_1 > b_2 > \dotsm > b_n, $
    and we add some of the agents~$y_1,\dotsc,y_n$ to the top of this
    partial ranking, and the remaining agents to the bottom of this partial
    ranking.

    To the ranking~$p_{x_i}$ we add exactly those agents~$y_j$ to the
    top where~$\varphi'$ contains a clause~$(\neg x_i \vee y_j)$
    or a clause~$(\neg x_i \vee \neg y_j)$.
    All remaining agents we add to the bottom.
    Similarly, to the ranking~$p_{\neg x_i}$ we add exactly those
    agents~$y_j$ to the top where~$\varphi'$ contains a
    clause~$(x_i \vee y_j)$ or a clause~$(x_i \vee \neg y_j)$.
The rankings~$p_{y_i}$ and~$p_{\neg y_i}$, for the agents~$y_i$,
    are constructed entirely similarly.

    Now consider a truth assignment~$\alpha :
    \{ x_1,\dotsc,x_n, y_1,\dotsc,y_n \} \rightarrow \{ 0,1 \}$,
    and consider the corresponding choice of preferences for the
    agents~$x_1,\dotsc,x_n,y_1,\dotsc,y_n$, where for each
    agent~$x_i$ the preference~$p_{x_i}$ is chosen if and only
    if~$\alpha(x_i) = 1$, and for each
    agent~$y_i$ the preference~$p_{y_i}$ is chosen if and only
    if~$\alpha(y_i) = 1$.
    Then~$\alpha$ satisfies~$\varphi'$ if and only if the corresponding
    choice of preferences leads to the matching being stable.
Since each combination of preferences is equally likely to occur,
    and there are~$2^{2n}$ many combinations of preferences,
    the probability that the given matching is stable is
    exactly~$q = \frac{s}{2^{2n}}$, where~$s$ is the number of satisfying
    truth assignments for~$\varphi$.
    Therefore, given~$q$,~$s$ can be obtained by
    computing~$s = q 2^{2n}$.
\end{proof}

\noindent
If each agent is allowed to have three possible preferences, then even the following problem is NP-complete. The statement can be proved via a reduction from Exact Cover by 3-Sets (X3C).

\begin{restatable}{theorem}{thProbStabNPC}\label{th:prob-stable-npc}
For the lottery model, {\sc IsStabilityProbabilityNon-Zero} is NP-complete.
\end{restatable}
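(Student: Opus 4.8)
The plan is to show membership in NP directly, and NP-hardness by a reduction from {\sc Exact Cover by 3-Sets} (X3C) in the form---well known to remain NP-complete---in which every element of the ground set lies in at most three of the given triples (this bounded-occurrence restriction is exactly what keeps the number of lists per agent at most three, matching the ``three possible preferences'' bound stated just before the theorem).

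Membership in NP is immediate: by definition of the lottery model, $\mu$ has non-zero stability probability iff some profile consisting of one preference list from each agent's lottery (each of positive marginal probability) makes $\mu$ stable. Such a profile is a polynomial-size certificate---for each agent we point to one of the lists occurring in the input---and stability of $\mu$ under a fixed profile is polynomial-time checkable.

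For the reduction, from an X3C instance $(U,\mathcal{S})$ with $|U|=3q$ and $\mathcal{S}=\{S_1,\dots,S_m\}$ I would build the following SMULP instance. For each element $u_j$ there is an \emph{element woman} $E_j$ whose lottery has one list per set containing $u_j$ (at most three lists, each with, say, uniform probability---only the supports matter); the list chosen for $E_j$ is read as ``the set that covers $u_j$''. For each set $S_i=\{u_a,u_b,u_c\}$ there are three \emph{link men} $F_i^1,F_i^2,F_i^3$, each with a two-list lottery $\{\text{yes},\text{no}\}$. Every $E_j$ gets a dummy man $\bar{E}_j$ and every $F_i^t$ a dummy woman $\bar{F}_i^t$, each with a single list topping its intended partner; the tested matching $\mu$ pairs $E_j$ with $\bar{E}_j$ and $F_i^t$ with $\bar{F}_i^t$. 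Since the dummies top-rank their $\mu$-partner, the only candidate blocking pairs are the pairs $\{F_i^t,E_j\}$ with $u_j\in S_i$. By choosing how high each relevant $E_j$ sits in the lists of the $F_i^t$'s, and how high each $F_i^t$ sits in the (at most three) lists of $E_a,E_b,E_c$, I would make these pairs encode, around the cycle $E_a,F_i^1,E_b,F_i^2,E_c,F_i^3,E_a$, the chain of implications ``$E_a$ picks $S_i$'' $\Rightarrow$ ``$F_i^1=\text{yes}$'' $\Rightarrow$ ``$E_b$ picks $S_i$'' $\Rightarrow\dots\Rightarrow$ ``$E_a$ picks $S_i$'': pair $\{F_i^t,E_j\}$ blocks iff the implication it carries is violated. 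Lists are made complete by appending the remaining candidates below the $\mu$-partner, which leaves the set of candidate blocking pairs unchanged; the two sides have equal size ($3q+3m$ each); and the whole construction is polynomial.

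It then remains to verify the equivalence. A realized profile avoids all blocking pairs iff every implication cycle is satisfied, i.e. iff for every $S_i$ the three elements of $S_i$ all agree on whether $S_i$ is their picked set (the $F_i^t$ values being then forced); and since every element picks exactly one set and $|U|=3q$, such a choice exists iff $\{S_i:\text{all of }S_i\text{'s elements pick }S_i\}$ is an exact cover of $U$, i.e. iff $(U,\mathcal{S})$ is a yes-instance. The main obstacle is the bipartite mismatch: ``the three elements of $S_i$ agree'' is a constraint among three agents on the \emph{same} side of the market, which blocking pairs cannot impose directly; the link men are introduced precisely to reroute it through the other side as the implication cycle above, and the remaining (routine) work is the bookkeeping that each relevant pair $\{F_i^t,E_j\}$ can be set to block exactly in the intended configurations consistently across all lists of $F_i^t$ and $E_j$, while no unintended pair ever blocks $\mu$.
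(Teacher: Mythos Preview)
Your proposal is correct, and it takes a genuinely different route from the paper's own proof, even though both reduce from X3C.

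The paper's reduction uses $n$ ``selector'' agents $b_1,\dots,b_n$, each of whom picks one of the $m$ sets in $C$ (so each $b_i$ has $|C|$ possible lists), together with $3n$ ``element'' agents $a'_1,\dots,a'_{3n}$, each of whom picks the unique selector that is allowed to cover him (so each $a'_j$ has $n$ possible lists). Stability of the canonical matching then forces the $n$ chosen sets to be pairwise disjoint and hence an exact cover. Your construction instead puts the choice on the \emph{element} side---each element woman $E_j$ directly names her covering set---and uses auxiliary ``link'' men $F_i^t$ with binary yes/no lotteries to propagate, via an implication cycle around each triple, the constraint that all three elements of a set agree on whether that set is used. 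The bookkeeping you flag (arranging, for every list of every $E_j$ and $F_i^t$, which of the at most six relevant partners sit above and which below the dummy) is indeed routine once the cycle structure is fixed; I checked that the constraints on each list are mutually consistent.

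What each approach buys: the paper's gadget is conceptually simpler (no implication cycles; just ``slot chooses set, element chooses slot''), but as written it gives agents with $m$ and $n$ possible preference lists, not three. Your construction, by invoking the bounded-occurrence variant of X3C, actually delivers the ``at most three possible preferences per agent'' bound that the text advertises just before the theorem, so in that respect your reduction is tighter.
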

\begin{proof} 
   The problem is in NP, since we only need to provide one profile that occurs with non-zero probability for which the given matching is stable. We show NP-hardness by giving a reduction from Exact Cover by 3-Sets (X3C). Let~$(X,C)$ be an instance of X3C, where~$|X| = 3n$ for some~$n$, and~$C = \{ c_1,\dotsc,c_m \}$ is a collection of sets~$c_i \subseteq X$, each of size~$3$. Moreover, let~$c_i = \{ x_{\ell_{i,1}}, x_{\ell_{i,2}}, x_{\ell_{i,3}} \}$, for each~$1 \leq i \leq m$. The problem is to decide whether there is a subset~$C'\subseteq C$ of size exactly~$n$ such that~$\bigcup C' = X$.

   We construct an instance of our problem as follows. We let~$\{ a_1,\dotsc,a_{n}, \allowbreak{}a'_1,\dotsc,a'_{3n} \}$ and~$\{ b_1,\dotsc,b_{n},b'_1,\dotsc,b'_{3n} \}$ be the two sets of agents, we match~$a_i$ to~$b_i$---for each~$1 \leq i \leq n$---
   and we match~$a'_j$ to~$b'_j$---for each~$1 \leq j \leq 3n$. This is depicted below.

   \begin{figure}
   \centering
   \begin{tikzpicture}
     \centering
      \begin{scope}
     \node[label=above:{$a_1$}] (a1) at (0,0) {$\bullet$};
     \node[label=below:{$b_1$}] (b1) at (0,-1) {$\bullet$};
     \path[draw,-] (a1) -- (b1);
     \node[label=above:{$a_2$}] (a2) at (1,0) {$\bullet$};
     \node[label=below:{$b_2$}] (b2) at (1,-1) {$\bullet$};
     \path[draw,-] (a2) -- (b2);
     \node[label=above:{$a_3$}] (a3) at (2,0) {$\bullet$};
     \node[label=below:{$b_3$}] (b3) at (2,-1) {$\bullet$};
     \path[draw,-] (a3) -- (b3);
     \node[] at (3,-0.5) {$\dotsm$};
     \node[label=above:{$a_{n}$}] (an) at (4,0) {$\bullet$};
     \node[label=below:{$b_{n}$}] (bn) at (4,-1) {$\bullet$};
     \path[draw,-] (an) -- (bn);

     \node[label=above:{$a'_1$}] (a1') at (6,0) {$\bullet$};
     \node[label=below:{$b'_1$}] (b1') at (6,-1) {$\bullet$};
     \path[draw,-] (a1') -- (b1');
     \node[label=above:{$a'_2$}] (a2') at (7,0) {$\bullet$};
     \node[label=below:{$b'_2$}] (b2') at (7,-1) {$\bullet$};
     \path[draw,-] (a2') -- (b2');
     \node[label=above:{$a'_3$}] (a3') at (8,0) {$\bullet$};
     \node[label=below:{$b'_3$}] (b3') at (8,-1) {$\bullet$};
     \path[draw,-] (a3') -- (b3');
     \node[] at (9,-0.5) {$\dotsm$};
     \node[label=above:{$a'_{3n}$}] (an') at (10,0) {$\bullet$};
     \node[label=below:{$b'_{3n}$}] (bn') at (10,-1) {$\bullet$};
     \path[draw,-] (an') -- (bn');

       \end{scope}
     %
   \end{tikzpicture}
   \label{fig:lottery-reduction-1}
   \caption{Illustration of the reduction}
   \end{figure}
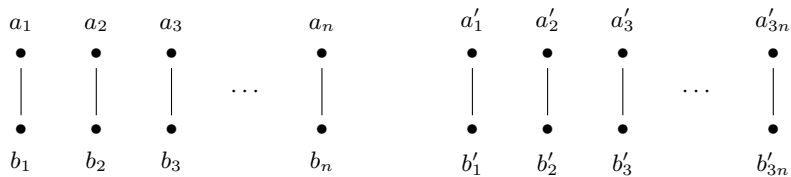

   Each agent~$a_i$ prefers their matching to~$b_i$ over any other possible match, i.e., agent~$a_i$ has one preference, where~$b_i$
   is ranked first, and the rest of the agents appear in some fixed order after $b_i$.

   Similarly, each agent~$b'_j$ prefers their matching to~$a'_j$ over any other possible match. That is, agent~$b'_j$ has one preference ordering in which ~$a'_j$ is ranked first and the rest of the agents appear in some fixed order after $a'_j$.

   Then, for each agent~$b_i$, we add the following $|C|$ possible preferences to the lottery:
   \begin{align*}
     P_{i,1} & : \quad a'_{\ell_{1,1}} > a'_{\ell_{1,2}} > a'_{\ell_{1,3}} > a_i > \dotsm \\
     P_{i,2} & : \quad a'_{\ell_{2,1}} > a'_{\ell_{2,2}} > a'_{\ell_{2,3}} > a_i > \dotsm \\
     & \hspace{1cm} \vdots \\
     P_{i,m} & : \quad a'_{\ell_{m,1}} > a'_{\ell_{m,2}} > a'_{\ell_{m,3}} > a_i > \dotsm
   \end{align*}
   where in each preference the remaining agents appear in any (fixed) order after~$a_i$. In other words,~$b_i$ prefers three agents~$a'_j$ to their current match, and these three form some set~$c \in C$.

   Finally, for each agent~$a'_j$, we add the following $n$ possible preferences to the lottery:
   \[ \begin{array}{r r}
     P'_{j,1} : \quad & b_2 > \dotsm > b_n > b'_j > b_1 > b'_1 > \dotsm > b'_{j-1} > b'_{j+1} > \dotsm > b'_{3n} \\
     P'_{j,2} : \quad & b_1 > b_3 > \dotsm > b_n > b'_j > b_2 > b'_1 > \dotsm > b'_{j-1} > b'_{j+1} > \dotsm > b'_{3n}\\
     P'_{j,3} : \quad & b_1 > b_2 > b_4 > \dotsm > b_n > b'_j > b_3 > b'_1 > \dotsm > b'_{j-1} > b'_{j+1} > \dotsm > b'_{3n} \\
     \vdots\quad\ & \vdots\qquad\quad\  \\
     P'_{j,n} : \quad & b_1 > \dotsm > b_{n-1} > b'_j > b_n > b'_1 > \dotsm > b'_{j-1} > b'_{j+1} > \dotsm > b'_{3n}
   \end{array} \]
   That is, each agent~$a'_j$ prefers each of the agents~$b_1,\dotsc,b_n$, except one, to their current match (and they never prefer any of the agents~$b'_{j'}$ for~$j' \neq j$ over their current match).
   %

   We can show that there is a choice of preferences for the agents that
   makes this matching stable if and only if~$(X,C) \in \mtext{X3C}$.

   $(\Rightarrow)$
   Firstly, suppose that there is a choice of preferences for the agents that
   makes this matching stable.
   That is, for each agent~$b_i$ there is some preference
   ordering~$P_{i,\ell_i}$, and for each agent~$a'_j$ there is
   some preference ordering~$P'_{j,k_j}$, such that
   these orderings (together with the fixed preference orderings for the
   agents~$a_i$ and~$b'_j$) make this matching stable.
   Now, consider the set~$C' = \{ c_{\ell} : i \in [n], \ell = \ell_i \}$.
   We show that~$\bigcup C' = X$.
   To derive a contradiction, suppose that this is not the case,
   that is, suppose that~$\bigcup C' \neq X$.
   Then, since~$|C'| = n$,~$|X| = 3n$ and for each~$c \in C'$
   it holds that~$|c| = 3$, we know that there must be
   some~$c_{\ell},c_{\ell'} \in C'$ such that~$c_{\ell} \cap c_{\ell'}
   \neq \emptyset$.
   Say that~$x_j \in c_{\ell} \cap c_{\ell'}$.
   Therefore, there must be some~$i,i' \in [n]$ such
   that both~$b_i$ and~$b_{i'}$ prefer~$a'_{j}$ over
   their current match.
   On the other hand,~$a'_{j}$ will prefer either~$b_i$
   or~$b_{i'}$ over their current match.
   Therefore, either~$b_i$ and~$a'_{j}$
   or~$b_{i'}$ and~$a'_{j}$ will form a blocking pair.
   Thus, the matching is not stable.
   From this we can conclude that~$\bigcup C' = X$.

   $(\Leftarrow)$
   Conversely, suppose that there exists some~$C' \subseteq C$
   of size exactly~$n$ such that~$\bigcup C' = X$.
   Let~$C' = \{ c_{\ell_1},\dotsc,c_{\ell_n} \}$.
   Now, for each agent~$b_i$ we pick some preference
   ordering, and for each agent~$a'_j$ we pick
   some preference ordering, such that
   these orderings (together with the fixed preference orderings for the
   agents~$a_i$ and~$b'_j$) make the matching stable.
   For each agent~$b_i$, we pick the preference
   ordering~$P_{i,\ell_i}$, and for each agent~$a'_j$
   we pick the preference ordering~$P'_{j,k_j}$,
   where~$k_j \in [n]$ is the unique value such
   that~$x_{j} \in c_{\ell_{k_j}}$.
   It is straightforward to verify that these preferences make
   the matching stable.
   \end{proof}


\noindent

We obtain the first corollary from Theorem~\ref{th:prob-stable-npc} and the second from \cite[Proposition~8]{WeMi00a} and Theorem~\ref{th:prob-stable-npc}.

\begin{corollary}\label{cor:approx}
For the lottery model, unless $\text{P}=\text{NP}$, there exists no polynomial-time algorithm for approximating {\sc StabilityProbability} of a given matching.
\end{corollary}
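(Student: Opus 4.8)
The plan is to derive Corollary~\ref{cor:approx} directly from Theorem~\ref{th:prob-stable-npc} by a standard gap argument. Theorem~\ref{th:prob-stable-npc} tells us that {\sc IsStabilityProbabilityNon-Zero} is NP-complete, i.e., deciding whether the stability probability of a given matching is strictly positive or exactly zero is NP-hard. The key observation is that \emph{any} algorithm that approximates {\sc StabilityProbability} --- in the usual multiplicative sense, returning a value $\hat p$ with $\frac{1}{\alpha} p \le \hat p \le \alpha p$ for some approximation factor $\alpha \ge 1$ --- can be used to distinguish the two cases: if the true probability $p$ is $0$, then any multiplicative approximation must also return $0$; if $p > 0$, then the approximation returns a strictly positive value. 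Hence a polynomial-time approximation algorithm would solve the NP-hard decision problem {\sc IsStabilityProbabilityNon-Zero} in polynomial time, forcing $\text{P}=\text{NP}$.

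Concretely, I would first recall that Theorem~\ref{th:prob-stable-npc} produces, from an X3C instance, an instance $I$ of the lottery model together with a specific matching $\mu$, such that $(X,C)\in\mtext{X3C}$ if and only if the stability probability of $\mu$ in $I$ is non-zero. I would then suppose, for contradiction, that there is a polynomial-time algorithm $A$ that on input $(I,\mu)$ outputs an approximation of the stability probability to within any fixed multiplicative factor (the statement is strong: \emph{no} polynomial-time approximation to within any factor). Running $A$ on $(I,\mu)$ and checking whether its output is zero or positive decides X3C in polynomial time. Since X3C is NP-complete, this gives $\text{P}=\text{NP}$, proving the corollary.

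The only subtlety --- and the one step worth stating carefully --- is the precise notion of approximation being ruled out. The natural reading, consistent with the impossibility being ``no algorithm for approximating {\sc StabilityProbability},'' is multiplicative approximation (or, equivalently here, any approximation guarantee that is required to be $0$ when the true value is $0$); for such notions the gap between ``value $0$'' and ``value $>0$'' is infinite and cannot be bridged, which is exactly what makes the reduction work. I would note explicitly that an \emph{additive} approximation is not ruled out by this argument, and that the corollary should be read in the multiplicative sense. No further calculation is needed; the argument is essentially a one-line consequence of the hardness of {\sc IsStabilityProbabilityNon-Zero}.
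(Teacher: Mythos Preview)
Your proposal is correct and matches the paper's approach exactly: the paper simply notes that the corollary follows from Theorem~\ref{th:prob-stable-npc}, and the gap argument you spell out (any multiplicative approximation distinguishes stability probability $0$ from stability probability $>0$, thereby deciding the NP-hard problem {\sc IsStabilityProbabilityNon-Zero}) is precisely the intended reasoning. Your explicit remark that the inapproximability is in the multiplicative sense is a useful clarification that the paper leaves implicit.
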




\begin{corollary}\label{cor:fpras}
For the lottery model, unless $\text{NP}=\text{RP}$, there is no FPRAS for {\sc StabilityProbability}.
\end{corollary}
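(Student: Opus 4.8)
The plan is to derive this immediately from Theorem~\ref{th:prob-stable-npc} together with \cite[Proposition~8]{WeMi00a}, which records the standard fact that if the ``is the value strictly positive?'' decision problem underlying a counting (or probability) problem is NP-hard, then that problem admits no FPRAS unless $\text{NP}=\text{RP}$. Concretely, I would first observe that an FPRAS for {\sc StabilityProbability} yields a randomized polynomial-time decision procedure for {\sc IsStabilityProbabilityNon-Zero}: run the FPRAS with a fixed relative error, say $\varepsilon = 1/2$, on the given matching and instance. In time polynomial in the input it returns an estimate $\hat p$ with $\Pr[(1-\varepsilon)p \le \hat p \le (1+\varepsilon)p] \ge 3/4$, where $p$ is the true stability probability. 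If $p = 0$, the guaranteed event forces $\hat p = 0$; if $p > 0$, it forces $\hat p \ge p/2 > 0$. Hence answering ``yes'' exactly when $\hat p > 0$ decides {\sc IsStabilityProbabilityNon-Zero} with bounded two-sided error.

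Next I would combine this with Theorem~\ref{th:prob-stable-npc}: since {\sc IsStabilityProbabilityNon-Zero} is NP-complete for the lottery model (and the instances produced by its reduction have polynomial size --- the lottery being, e.g., uniform over the listed preferences), we obtain $\text{NP} \subseteq \text{BPP}$, and therefore, by the usual self-reducibility search-to-decision argument for \textsc{Sat}, $\text{NP} \subseteq \text{RP}$; since $\text{RP} \subseteq \text{NP}$ trivially, this gives $\text{NP} = \text{RP}$. Equivalently and more briefly, one simply invokes \cite[Proposition~8]{WeMi00a}, supplying the NP-hardness of {\sc IsStabilityProbabilityNon-Zero} from Theorem~\ref{th:prob-stable-npc}.

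There is no real technical obstacle here; the only points worth flagging are (i) that a \emph{multiplicative} approximation guarantee is exactly what separates ``zero'' from ``positive'' --- an additive scheme would not suffice, since $p$ can be exponentially small (e.g.\ $p = s/2^{2n}$ in the \#P-hardness reduction) --- and (ii) that the passage from $\text{NP}\subseteq\text{BPP}$ to $\text{NP}=\text{RP}$, which is where the one-sided error of RP actually comes from, relies on the self-reducibility of \textsc{Sat}; both of these are already packaged inside \cite[Proposition~8]{WeMi00a}.
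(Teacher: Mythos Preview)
Your proposal is correct and matches the paper's approach exactly: the paper derives the corollary by combining Theorem~\ref{th:prob-stable-npc} with \cite[Proposition~8]{WeMi00a}, and you have simply unpacked the standard argument inside that proposition.
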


%
\section{Compact Indifference Model}
The compact indifference model is equivalent to assuming that we are given an instance of SMT and each linear order over candidates (each possible preference ordering) is achieved by breaking ties independently at random with uniform probabilities.
It is easy to show that {\sc IsStabilityProbablityNonZero}, {\sc IsStabilityProbablityOne}, and {\sc ExistsCertainlyStableMatching} are all in P. 

\begin{proposition}\label{IsStabilityProbabilityNonZero-compact}
For the compact indifference model, {\sc IsStabilityProbabilityNonZero} is in P.
\end{proposition}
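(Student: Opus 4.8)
The plan is to reduce {\sc IsStabilityProbabilityNonZero} for the compact indifference model to the problem of deciding whether an instance of SMTI (equivalently, SMP) admits a \emph{weakly stable} matching that equals the given matching $\mu$. Recall that in the compact indifference model each agent submits a weak order (an SMT instance), and a linear extension is obtained by breaking ties uniformly at random. A matching $\mu$ has non-zero stability probability if and only if there is \emph{some} way of breaking all the ties so that $\mu$ becomes stable. I claim this happens precisely when $\mu$ is weakly stable in the underlying SMT instance, i.e.\ there is no pair $\{m,w\}$ with $w \strictpref_m \mu(m)$ \emph{and} $m \strictpref_w \mu(w)$ where both preferences are strict in the weak order. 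Hence the algorithm is: check, for every non-matched pair $\{m,w\}$, that it is not the case that $m$ strictly prefers $w$ to $\mu(m)$ and $w$ strictly prefers $m$ to $\mu(w)$; this is clearly polynomial time.

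The key step is the equivalence between "non-zero stability probability" and "weak stability of $\mu$ in the SMT instance." One direction is immediate: if some tie-breaking makes $\mu$ stable (no blocking pair w.r.t.\ the resulting linear orders), then in particular no pair blocks \emph{strictly} in the weak order, since a strict blocking pair in the weak order remains a blocking pair under any tie-breaking. For the converse, suppose $\mu$ is weakly stable in the SMT instance; I must exhibit a single tie-breaking under which $\mu$ is (strictly) stable. The natural choice is to break ties so as to \emph{favor} $\mu$: for each agent $a$, within the tie (equivalence class) of the weak order that contains $\mu(a)$, place $\mu(a)$ at the top; ties not containing $\mu(a)$ may be broken arbitrarily. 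Then one checks that no pair $\{m,w\} \notin \mu$ can block: if $w \strictpref_m \mu(m)$ in the broken order, then either $w$ was strictly above $\mu(m)$'s tie already in the weak order, or $w$ is in the same tie as $\mu(m)$ but then our tie-breaking put $\mu(m)$ on top, contradiction; so $w$ was strictly preferred to $\mu(m)$ in the weak order. Symmetrically, if $m \strictpref_w \mu(w)$ in the broken order then $m$ was strictly preferred to $\mu(w)$ in the weak order. Together these give a strict blocking pair in the weak order, contradicting weak stability. Hence $\mu$ is stable under this tie-breaking, which occurs with positive probability.

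The main obstacle to watch for is getting the tie-breaking argument exactly right when $\mu(a)$ lies in a tie together with other candidates — one must ensure that promoting $\mu(a)$ to the top of its own tie for \emph{every} agent simultaneously does not introduce a blocking pair via some other agent's promotion; the argument above handles this because the reasoning is local to each pair $\{m,w\}$ and only uses that $\mu(m)$ is at the top of its tie for $m$ and $\mu(w)$ is at the top of its tie for $w$. One can also phrase the whole thing more slickly: $\mu$ has non-zero stability probability in the compact indifference model if and only if $\mu$ is weakly stable in the associated SMT instance, and weak stability is checkable in polynomial time (indeed by inspecting each of the $O(n^2)$ pairs). Note the parallel with Lemma~\ref{lemma:charac-certainly-stable}: here the relevant relation is ``strictly preferred in the weak order'' (analogous to $\strictpref^{\certain}$ only for one agent of the pair suffices to rule out blocking), rather than the certainly-preferred relation needed for stability \emph{probability one}.
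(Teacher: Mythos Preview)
Your proposal is correct and takes essentially the same approach as the paper: both reduce the question to checking weak stability of $\mu$ in the underlying SMT instance by scanning all $O(n^2)$ unmatched pairs for a strict blocking pair. You supply the explicit tie-breaking argument (promote $\mu(a)$ to the top of its tie) that the paper leaves implicit, but the core idea is identical.
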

\begin{proof}
This is equivalent to checking whether a given matching $\mu$ is weakly stable in the given SMTI instance. To check this we only have to look for a blocking pair, which can be done in polynomial time: take every possible pair $(m,w)$ who are not matched together and check whether they both strictly prefer each other to their current partner.
\end{proof}

\begin{proposition}\label{IsStabilityProbabilityOne-compact}
For the compact indifference model, {\sc IsStabilityProbabilityOne} is in P.
\end{proposition}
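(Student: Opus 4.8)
The plan is to observe that, for the compact indifference model, {\sc IsStabilityProbabilityOne} is precisely the question of whether the given matching $\mu$ is certainly stable, i.e., stable under every tie-breaking of the underlying SMT instance. This is because in this model every linear extension of an agent's reported weak order occurs with strictly positive probability, so stability probability equals $1$ exactly when $\mu$ is stable for all extensions, which is the definition of $\mu$ being super-stable in the SMT instance. Since the compact indifference model is an independent uncertainty model, Proposition~\ref{prop:no-blocking-pair} already characterizes certain stability as the absence of a very weakly blocking pair, so it suffices to decide that condition in polynomial time.

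First I would recall that for the compact indifference model the certainly preferred relation is trivial to read off from the input: $b \succ_a^{\certain} c$ holds precisely when $a$'s reported weak order places $b$ in a strictly better equivalence class than $c$. Hence $\mu(m) \not\succ_m^{\certain} w$ means that in $m$'s weak list $w$ lies in the same tie as $\mu(m)$ or in a strictly better one, and symmetrically for $w$ and $\mu(w)$; both tests take polynomial time.

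The algorithm then simply iterates over all $O(n^2)$ pairs $(m,w)$ not matched together in $\mu$ and checks whether $\mu(m) \not\succ_m^{\certain} w$ and $\mu(w) \not\succ_w^{\certain} m$; it answers ``yes'' if and only if no such pair exists. Correctness is immediate from Proposition~\ref{prop:no-blocking-pair} (equivalently, this is exactly a test for super-stability of $\mu$ in the SMT instance), and since each pair test is polynomial, the whole procedure is polynomial-time.

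I do not expect a real obstacle here; the only things to get right are bookkeeping points: that a very weakly blocking pair requires each side to \emph{weakly} (not strictly) prefer the other, and that {\sc IsStabilityProbabilityOne} corresponds to certain stability rather than to some weaker notion, which holds because every tie-breaking has positive probability in this model. These justify invoking Proposition~\ref{prop:no-blocking-pair} and Lemma~\ref{lemma:charac-certainly-stable} directly.
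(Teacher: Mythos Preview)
Your proposal is correct and essentially the same as the paper's proof: both test every unmatched pair $(m,w)$ and ask whether, under the most favorable tie-breaking, it could block $\mu$, which is exactly the ``very weakly blocking pair'' check of Proposition~\ref{prop:no-blocking-pair}. The only difference is presentational: you invoke Proposition~\ref{prop:no-blocking-pair} explicitly, while the paper phrases the same test as ``breaking ties in favour of the blocking pair'' without citing the earlier characterization.
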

\begin{proof}
The problem is polynomial-time solvable. We go through all the blocking pairs and check if any blocking pair is feasible. For each blocking pair, we break ties (if there are any) in favour of the blocking pair. Given that we break ties in favour of the blocking pairs, if there exists a blocking pair that is feasible, the stability probability is not one.
\end{proof}

\begin{proposition}\label{IsCertainlyStableMatching-compact}
For the compact indifference model, {\sc ExistsCertainlyStableMatching} is in P.
\end{proposition}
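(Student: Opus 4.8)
The plan is to derive this as an immediate corollary of Theorem~\ref{th:exists-matchwithprob1-p-bothsidestransitive-baharpeter}. That theorem shows that {\sc ExistsCertainlyStableMatching} is polynomial-time solvable for every independent uncertainty model whose certainly preferred relation is transitive and polynomial-time computable, so it suffices to verify that the compact indifference model satisfies both conditions.

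First I would recall that the compact indifference model is independent: the input consists of one weak preference order per agent, and the induced distribution over linear profiles is the product of the individual distributions (each uniform over the linear extensions of that agent's weak order), as already noted in the Preliminaries. Next I would pin down the certainly preferred relation. For an agent $a$ whose weak order has ties $T_1 \succ T_2 \succ \dotsm$, I claim $b \succ_a^{\certain} c$ if and only if $b$ lies in a strictly earlier tie than $c$: if $b$'s tie precedes $c$'s tie then every linear extension ranks $b$ above $c$; and if $b$ and $c$ are in the same tie, or $c$'s tie precedes $b$'s, then some linear extension ranks $c$ above $b$, so $a$ does not certainly prefer $b$ to $c$. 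Hence $\succ_a^{\certain}$ is exactly the strict part of the weak order, which is transitive, and it can be read off the input in linear time.

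With both hypotheses verified, Theorem~\ref{th:exists-matchwithprob1-p-bothsidestransitive-baharpeter} applies directly and yields the polynomial-time algorithm. For intuition I would also note that the SMP instance produced by that reduction is precisely the SMT instance underlying the compact indifference input, and that by Proposition~\ref{prop:no-blocking-pair} a matching is certainly stable here if and only if it is super-stable there; so the statement equally follows from the known polynomial-time super-stability algorithm for SMT. I do not anticipate any real obstacle: the only point requiring care is the characterisation of $\succ_a^{\certain}$ in terms of the weak order, which is exactly where the uniform-over-linear-extensions assumption of the model enters.
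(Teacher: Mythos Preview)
Your argument is correct. The paper's proof is even more direct: it simply observes that a matching is certainly stable in the compact indifference model if and only if it is super-stable in the underlying SMTI instance, and then cites the known polynomial-time algorithm for deciding the existence of a super-stable matching in SMTI \cite{manlove99}. This is precisely the observation you append at the end ``for intuition.'' Your primary route through Theorem~\ref{th:exists-matchwithprob1-p-bothsidestransitive-baharpeter} is a slight detour---that theorem itself reduces to super-stability in SMP, which in this case specialises to the same SMTI super-stability check---but it has the virtue of showing how the proposition fits the general framework. Either argument is fine; the paper just skips the detour.
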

\begin{proof}
Deciding whether there is matching that is stable with probability one is equivalent to deciding whether there is a matching that is stable w.r.t. all refinements, a super-stable matching. Given an instance of SMTI one can decide in polynomial time whether it admits a super-stable matching or not \cite{manlove99}.
\end{proof}

We do not yet know the complexity of computing the stability probability of a given matching under the compact indifference model, but this problem can be shown to be in P if one side has certain preferences.
\begin{theorem}\label{StabilityProbability-compact}
In the compact indifference model, if one side has certain preferences, {\sc StabilityProbability} is polynomial-time solvable.
\end{theorem}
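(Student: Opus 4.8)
The plan is to exploit the same structure already used for the lottery model in Theorem~\ref{thm:lottery-certain-stabilityprobability-poly}: once one side (say the men) has strict, certain preferences, the randomness lives entirely in how each uncertain woman breaks her ties, and — crucially — independently across women. So I would first argue that the stability probability factorizes as a product over the uncertain women of the probability that that woman, together with her tie-breaking, creates no blocking pair with any man. Fix a matching $\mu$. For a man $m$, whether $\{m,w\}$ blocks depends only on whether $m$ strictly prefers $w$ to $\mu(m)$ (a fixed fact, since $m$'s list is certain) and whether $w$ strictly prefers $m$ to $\mu(w)$ after her ties are broken. Hence the event ``$\mu$ is stable'' is the conjunction, over uncertain women $w$, of the event $E_w$ that in $w$'s realized linear order, every man $m$ with $w \succ_m \mu(m)$ satisfies $\mu(w) \succ_w m$ — and these events are independent because tie-breaks are independent. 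So $p(\mu) = \prod_{w \text{ uncertain}} \Pr[E_w]$.

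**The per-woman computation.** The remaining task is: given one woman $w$ with a weak order (ties) over the men, and given the fixed ``threat set'' $T_w = \{m : w \succ_m \mu(m)\}$ of men who would block with $w$ if $w$ prefers them to $\mu(w)$, compute the probability that a uniformly random linear extension of $w$'s weak order places $\mu(w)$ strictly above every man in $T_w$. I would organize this by the tie-classes of $w$. Let the tie-classes of $w$, from best to worst, be $C_1 \succ C_2 \succ \dots \succ C_k$, and suppose $\mu(w) \in C_j$. Men in tie-classes $C_1,\dots,C_{j-1}$ are certainly preferred to $\mu(w)$, so if any of them lies in $T_w$ then $\Pr[E_w]=0$. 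Men in $C_{j+1},\dots,C_k$ are certainly worse than $\mu(w)$, so they never threaten. Thus everything reduces to the single tie-class $C_j$ containing $\mu(w)$: within a uniformly random ordering of the $|C_j|$ men in $C_j$, we need $\mu(w)$ to come before all members of $T_w \cap C_j$. If $|T_w \cap C_j| = t$, then among the $t+1$ relevant elements (the $t$ threats plus $\mu(w)$) each is equally likely to be first, so $\Pr[E_w] = \frac{1}{t+1}$. Multiplying these closed-form factors over all uncertain women, and handling the zero cases, gives $p(\mu)$ in polynomial time.

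**Main obstacle.** The conceptual content is light; the one thing to get right is the independence/factorization claim and the careful bookkeeping of which men are ``certainly above,'' ``certainly below,'' or ``tied with'' $\mu(w)$ in $w$'s list. In particular one must double-check that a blocking pair $\{m,w\}$ with $w$ uncertain but $m$ certain contributes to exactly one factor (woman $w$'s), and that pairs with both endpoints certain are simply checked deterministically up front: if any such pair blocks $\mu$, output probability $0$. I would also note that the argument is symmetric, so it applies equally when the women are the certain side. Writing it up, I would: (1) dispose of deterministic blocking pairs among certain agents; (2) state and justify the product formula via independence of tie-breaks; (3) give the $\frac{1}{t+1}$ (or $0$) formula for each uncertain woman by the tie-class analysis above; (4) observe all quantities are computable in polynomial time and conclude.
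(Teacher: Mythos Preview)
Your proposal is correct and follows essentially the same approach as the paper: factorize the stability probability over uncertain women via independence of tie-breaks, and for each such woman compute the probability that she creates no blocking pair as $\tfrac{1}{t+1}$, where $t$ is the number of men who strictly prefer her and lie in the same tie-class as her partner. Your write-up is in fact more careful than the paper's terse version, as you explicitly handle the deterministic zero case (a threat strictly above $\mu(w)$) and spell out the independence argument.
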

\begin{proof}
Assume, w.l.o.g., that men have certain preferences. The following procedure gives us the stability probability of any given matching $\mu$. (1) For each uncertain woman $w$ identify those men with whom she can potentially form a blocking pair. That is, those $m$ such that $w \pref_{m} \mu(m)$ and $w$ is indifferent between $m$ and her partner in $\mu$. Assume there are $k$ of such men. The probability of $w$ not forming a blocking pair with any men is then $\frac{1}{k+1}$. (2) Multiply the probabilities from step 1.
\end{proof}

\noindent
We next show that {\sc MatchingWithHighestStabilityProbability} is NP-hard.
For an instance $I$ of SMT and matching $\mu$, let $p(\mu,I)$ denote the probability of $\mu$ being stable, and let $p_S(I)=max\{p(\mu,I) | \mu \mbox{ is a matching in } I\}$, that is the maximum probability of a matching being stable. 
A matching $\mu$ is said to be weakly stable if there exists a tie-breaking rule where $\mu$ is stable. Therefore a matching $\mu$ has positive probability of being stable if and only if it is weakly stable. Furthermore, if the number of possible tie-breaking is $N$ then any weakly stable matching has a probability of being stable at least $\frac{1}{N}$.

An extreme case occurs if we have one woman only with $n$ men, where the woman is indifferent between all men. In this case any matching (pair) has a $\frac{1}{n}$ probability of being stable. An even more unfortunate scenario is when we have $n$ men and $n$ women, each women is indifferent between all men, and each man ranks the women in a strict order in the same way, e.g. in the order of their indices. In this case, the probability that the first woman picks her best partner, and thus does not block any matching is $\frac{1}{n}$. Suppose that the first woman picked her best partner, the probability that the second woman also picks her best partner from the remaining $n-1$ men is $\frac{1}{n-1}$, and so on. Therefore, the probability that an arbitrary complete matching is stable is $\frac{1}{n(n-1)\dots 2}=\frac{1}{n!}$.

\begin{restatable}{theorem}{thmHighestStabprobCompact}\label{MatchingWithHighestStabilityProbability-compact}
For the compact indifference model {\sc MatchingWithHighestStabilityProbability} is NP-hard, even if only one side of the market has uncertain agents.
\end{restatable}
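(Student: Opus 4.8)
The plan is to prove NP-hardness of {\sc MatchingWithHighestStabilityProbability} in the compact indifference model by reducing from a known NP-complete problem such as {\sc Exact Cover by 3-Sets} (X3C) or a restricted SAT variant. The key leverage is Theorem~\ref{StabilityProbability-compact}: once one side has certain preferences, the stability probability of a fixed matching factorizes as a product of terms of the form $\frac{1}{k+1}$, one for each uncertain agent, where $k$ counts the men with whom that agent could form a blocking pair. So the reduction should engineer a gadget where different choices of partner for the uncertain (women's) side either drive some of these $k$ values to $0$ (yielding a factor of $1$) or leave them positive (yielding a factor strictly less than $1$), and where the product equals the maximum possible value exactly when a combinatorial structure---an exact cover, or a satisfying assignment---exists.

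Concretely, I would set up women with ties and men with strict preferences so that a perfect matching must ``cover'' every man-preference constraint; unmatched-but-preferred pairs are precisely the blocking pairs that contribute the unwanted $\frac{1}{k+1}$ factors. The threshold (the value of $p_S(I)$ one tests against) would be the product one gets when every uncertain woman is matched to her top tie-mate with no man preferring her, which happens iff the underlying instance is a ``yes'' instance. Because the reduction uses agents with incomplete preference lists, this is exactly the one place (noted in Section~\ref{sec:complete} / Proposition~\ref{complete} and the remark before Theorem~\ref{MatchingWithHighestStabilityProbability-compact}) where the reduction is stated for incomplete lists and then extended to complete lists by Proposition~\ref{complete}; so the last step of the proof is to invoke Proposition~\ref{complete} to transfer hardness to complete lists.

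**The main obstacle** I anticipate is controlling the interaction between the many $\frac{1}{k+1}$ factors: since the objective is multiplicative, a careful choice of the tie sizes and of how many men can ``point at'' each woman is needed so that (i) in a ``yes'' instance the product attains a clean target value, and (ii) in a ``no'' instance \emph{every} matching is strictly below that target, with a gap large enough that it cannot be closed by rearranging partners elsewhere. This requires a gadget where local improvements are impossible---i.e., decreasing one woman's $k$ necessarily forces another woman's $k$ to increase---mirroring the all-or-nothing behavior of exact cover. I would argue this by a counting/pigeonhole argument: if the target product is attained then no uncertain woman has a man strictly preferring her, which forces a system of disjointness constraints equivalent to the combinatorial certificate. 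Proving that ``no'' instances admit no matching meeting the threshold is the crux, and I expect it to follow the same disjointness-vs-overlap dichotomy used in Theorem~\ref{th:prob-stable-npc}'s X3C reduction, adapted so that overlaps strictly lower the product rather than merely destroying stability.

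Finally, I would confirm membership issues are not needed here (the theorem only claims NP-hardness, not NP-completeness, because evaluating $p_S(I)$ exactly is itself potentially hard), note that the construction keeps all uncertainty on the women's side so the ``only one side uncertain'' strengthening holds, and close by remarking that combined with Theorem~\ref{thm:highestprobability-constant} this pins down the boundary: hardness already with uncertainty on one side and unboundedly many uncertain agents, tractability with one side certain \emph{and} $O(1)$ uncertain agents.
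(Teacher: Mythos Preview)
Your plan takes a genuinely different route from the paper. You propose a direct reduction from X3C (or SAT), exploiting the product formula $\prod_w \frac{1}{k_w+1}$ of Theorem~\ref{StabilityProbability-compact} and aiming for a threshold that is attained exactly when an exact cover exists. The paper instead reduces from the \emph{gap} version of MAX-SMTI due to Halld\'orsson et~al.: it is NP-hard to distinguish SMTI instances (one-sided ties, each of length two) whose maximum weakly stable matching has size above $\tfrac{41}{54}n$ from those where it is below $\tfrac{39}{54}n$. The paper pads such an instance with auxiliary agents $X,Y,Z$ and blocks $Y^j$ so that in the ``large'' case all original women can be matched and $p_S(I')\ge 2^{-n}$, whereas in the ``small'' case too many women overflow into $X\cup Y$, which strips every $z_j$ of her top choice $y_j$ and forces her into a tie of size $n$, giving $p_S(I')\le n^{-n/27}$. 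For $n>2^{27}$ these two ranges are disjoint, yielding hardness of the decision problem (and, by reading off $|W_u|$ from an optimal matching, of the search problem as well). What this buys the paper is that no delicate arithmetic is needed: the gap is exponential, so one never has to argue that local rearrangements cannot close it.

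Your proposal, by contrast, has precisely that delicate arithmetic as its acknowledged obstacle, and the obstacle is real rather than cosmetic. As you correctly observe, the target cannot be probability~$1$ (that is {\sc ExistsCertainlyStableMatching}, in P here), so you need a strictly sub-$1$ threshold and must exclude the possibility that a ``no'' instance admits some matching whose product of $\tfrac{1}{k_w+1}$ factors happens to match the target via a different profile of $k_w$ values. You gesture at a pigeonhole/disjointness argument in the spirit of Theorem~\ref{th:prob-stable-npc}, but you have not exhibited the construction, and the compact indifference model is more rigid than the lottery model: ties are equivalence classes, so you cannot place a single man into a woman's tie without tying him to everyone else already there, which sharply limits the gadgetry available. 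Until the instance is written down and the ``no''-case analysis is carried out, what you have is a plan, not a proof. If you do pursue this line, the paper's device---auxiliary agents with ties of size $n$, so that each unresolved constraint already costs a factor $\tfrac{1}{n}$---is exactly the amplification trick that would let you replace a tight threshold by an exponential gap and sidestep the difficulty you flag.
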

\begin{proof} 
For an instance $I$ of SMTI, let $opt(I)$ denote the maximum size of a weakly stable matching in $I$. Halldorsson et al. \cite{HIY07a} showed [in the proof of Corollary 3.4] that given an instance $I$ of SMTI of size $n$, where only one side of the market has agents with indifferences and each of these agents has a single tie of size two, and any arbitrary small positive $\epsilon$, it is NP-hard to distinguish between the following two cases:
\begin{inparaenum}[(1)]
      \item $opt(I) \geq \frac{21-\epsilon}{27}n$ and
      \item $opt(I) < \frac{19+\epsilon}{27}n$.
\end{inparaenum}

When choosing $\epsilon$ so that $0<\epsilon <\frac{1}{2}$ we can simplify the above cases to
\begin{inparaenum}[(1)]
      \item $opt(I) > \frac{41}{54}n$, since $opt(I) \geq \frac{21-\epsilon}{27}n > \frac{41}{54}n$ and
      \item $opt(I) < \frac{39}{54}n$, since $opt(I) < \frac{19+\epsilon}{27}n < \frac{39}{54}n$.
\end{inparaenum}

Therefore, the number of agents left unmatched on either side of the market is less than $\frac{13}{54}n$ in the first case and more than $\frac{15}{54}n$ in the second case. Let us now extend instance $I$ to a larger instance of SMTI $I'$ as follows. Besides the $n$ men $M=\{m_1, \dots , m_n\}$ and $n$ women $W=\{w_1, \dots ,w_n\}$, we introduce $\frac{13}{54}n$ men $X=\{x_1,\dots x_k\}$ and another $\frac{n}{27}$ men $Y=\{y_1,\dots y_l\}$ and $\frac{n}{27}$ women $Z=\{z_1,\dots z_l\}$. Furthermore, for each $y_j\in Y$, we introduce $n$ men $Y^j=\{y^j_1,\dots , y^j_n\}$. We create the preferences of $I'$ as follows. The preferences of men $M$ remain the same. For each woman $w\in W$ we append the men $X$ and then $Y$ at the end of her list in the order of their indices. Each man $x_i\in X$ has only all the women $W$ in his list in the order of their indices. Furthermore, each $y_j\in Y$ has all the women $W$ first in his preference list in the order or their indices and then $z_j$. Let each $z_j\in Z$ has $y_j$ as first choice and then all the men $Y^j$ in one tie of size $n$. Each man in $Y^j$ has only $z_j$ in his list. We will show that in case one $p_S(I')\geq \frac{1}{2^n}$, whilst in case two $p_S\leq (\frac{1}{n})^{\frac{n}{27}}$. Therefore, for $n>2^{27}$, it is NP-hard to decide which of the two separate intervals contains the value $p_S(I')$.

                    To show the above statement, suppose first that we have the first case, so $opt(I) > \frac{41}{54}n$ and therefore less than $\frac{13}{54}n$ women are left unmatched in a maximum size weakly stable matching $\mu$ for $I$, denoted by $W_u\subset W$. We extend $\mu$ to $\mu'$ for $I'$ as follows. We assign all the women in $W_u$ to men in $X$ in the unique stable way, namely we pair them in a mutually increasing order of their indices. Since $|X|>|W_u|$, we now matched all women in $W$, and left some men in $X$ unmatched in $\mu'$. We complete the matching by assigning $y_j$ to $z_j$ for each $j=1,\dots ,n$ and leaving all of the men in $Y^j$ for all $j$ unmatched. We shall see that no matter how we break the ties in $I'$, blocking pair can appear between the original $I$ agents only, and therefore the probability of $\mu'$ being stable in $I'$ is the same as the probability of $\mu$ being stable in $I$. Since we have at most $n$ ties in $I$, each of length two, the number of different tie-breakings is at most $2^n$, out of which at least one is stable. Therefore $p(\mu,I')=p(\mu,I)\geq \frac{1}{2^n}$.

                    In the second case, $opt(I) < \frac{39}{54}n$ and therefore more than $\frac{15}{54}n$ women are left unmatched in any weakly stable matching $\mu$ for $I$. Let $\mu'$ be one of the most stable matchings in $I'$. First we have to note that the restriction of $\mu'$ to $I$ must be weakly stable in $I$, since otherwise $p(\mu',I')=0$. Let $W_u$ denote the set of women that are not matched to any man from $M$ in $\mu'$. According to our assumption $|W_u|>\frac{15}{54}n$, whilst $|X|+|Y|=\frac{15}{54}n$, therefore in order to avoid a certain blocking pair between $W_u$ and $X\cup Y$ we shall match all the men in $X\cup Y$ to women in $W_u$ in the only stable way (in the order of indices, where men in $X$ are coming before men in $Y$), an leaving some women in $W_u$ unmatched in $\mu'$. However, in this case no agent $z_j\in Z$ can be matched to $y_j$, and therefore, even if there was no potential blocking pair between agents of $I$, the probability that $z_j$ is matched the best partner from $Y^j$ is $\frac{1}{n}$ independently for each $z_j\in Z$. Therefore the probability of $\mu'$ being stable is at most $(\frac{1}{n})^{\frac{n}{27}}$, which completes the proof of the first statement.

                    Regarding the NP-hardness of finding one of the most stable matchings, we shall prove that we can decide between the two cases according to the number of unmatched women in $W$ in the restriction of $\mu'$ to $I$, where $\mu'$ is one of the most stable matchings in $I'$. To see this, let $W_u$ denote again the set of women that are not matched to any man in $M$ under $\mu'$. In the first case, when $opt(I) > \frac{41}{54}n$, it must be the case that $|W_u|<\frac{15}{54}n$, since otherwise $p(\mu',I)$ would be less than $(\frac{1}{n})^{\frac{n}{27}}$ and could not achieve $\frac{1}{2^n}$, that is the minimum value for $p_S(I')$, as shown in the above argument. Whilst, in the second case $|W_u|>\frac{15}{54}n$ must hold, since $opt(I) < \frac{39}{54}n$ was assumed.
\end{proof}
\section{Joint Probability Model}

In this section, we examine  problems concerning the joint probability model.



\begin{restatable}{theorem}{thmJointPoly}\label{thm:joint-stabilityprobability-poly}
For the joint probability model, {\sc StabilityProbability} can be solved in polynomial time.
\end{restatable}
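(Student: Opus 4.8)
The plan is to exploit the fact that in the joint probability model the input \emph{explicitly} lists the support of the distribution over preference profiles. Write the input as a list of pairs $(\pi_1,q_1),\dots,(\pi_t,q_t)$, where each $\pi_k$ is a deterministic preference profile (a linear order for every agent) and $q_k>0$ with $\sum_{k=1}^t q_k = 1$; the total input size is polynomial in $t$ and $n$.

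First I would recall that for a \emph{deterministic} preference profile $\pi$ and a fixed matching $\mu$ one can decide in polynomial time whether $\mu$ is stable under $\pi$: iterate over all $O(n^2)$ pairs $(m,w)$ with $\mu(m)\neq w$ and test whether $w\succ_m \mu(m)$ and $m\succ_w \mu(w)$ in $\pi$; the matching $\mu$ is stable under $\pi$ exactly when no such blocking pair exists.

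The algorithm for {\sc StabilityProbability} is then immediate: initialize $p:=0$; for each $k\in\{1,\dots,t\}$ run the stability test for $\mu$ under $\pi_k$, and whenever it succeeds add $q_k$ to $p$; finally output $p$. By the definition of the joint probability model, the probability that $\mu$ is stable equals $\sum_{k\,:\,\mu\text{ stable under }\pi_k} q_k$, so the output is correct. The running time is $O(t\cdot n^2)$ plus $t$ additions of the given rationals, which is polynomial in the input size.

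There is essentially no obstacle here beyond being careful about the measurement of input size: because the joint probability model is not independent and its distribution is conveyed as an explicit (possibly long) list of profiles, ``polynomial time'' means polynomial in the length of that list, and brute-force enumeration over the support is already efficient with respect to it. This is in sharp contrast to the lottery model, where the support of the \emph{induced} distribution over profiles can be exponential in the compactly given input, which is precisely what makes {\sc StabilityProbability} \#P-complete there.
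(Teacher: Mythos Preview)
Your proposal is correct and follows essentially the same approach as the paper: iterate over the explicitly listed preference profiles in the support, test stability of $\mu$ against each one in $O(n^2)$ time, and sum the probabilities of the profiles for which $\mu$ is stable. Your write-up is in fact more explicit than the paper's about the input representation and why the running time is polynomial in the input size.
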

\begin{proof} 
The probability that a given matching is stable is equivalent to the probability weight of the preference profiles for which the matching is stable. This can be checked as follows. We check the preference profiles for which the given matching is stable (for one profile, this can be checked in $O(n^2)$). Then we add the probabilities of those profiles for which the matching is stable. The sum of the probabilities is the probability that the matching is stable.
\end{proof}

\begin{corollary}
For the joint probability model, {\sc IsStabilityProbabilityNon-Zero} and {\sc IsStabilityProbabilityOne}  can be solved in polynomial time.
\end{corollary}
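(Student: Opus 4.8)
The plan is to derive both statements as immediate consequences of Theorem~\ref{thm:joint-stabilityprobability-poly}. Recall that in the joint probability model the input explicitly lists the preference profiles in the support of the distribution together with their probabilities, so running time is measured against the total size of this list. By Theorem~\ref{thm:joint-stabilityprobability-poly}, for a given matching $\mu$ we can compute its exact stability probability $p(\mu)$ in time polynomial in the input size. Once $p(\mu)$ is known, {\sc IsStabilityProbabilityNon-Zero} is answered affirmatively exactly when $p(\mu)>0$, and {\sc IsStabilityProbabilityOne} is answered affirmatively exactly when $p(\mu)=1$; each test is a single numerical comparison of exact rationals.

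Alternatively --- and this is the version I would actually write out, since it avoids any reliance on exact arithmetic over the probabilities --- I would solve both problems by a single pass over the listed profiles. For each profile $P$ in the support we check in $O(n^2)$ time whether $\mu$ is stable under $P$ by scanning all unmatched pairs $(m,w)$ for a blocking pair. For {\sc IsStabilityProbabilityNon-Zero} we output ``yes'' as soon as some listed profile makes $\mu$ stable, and ``no'' otherwise; this is correct because the stability probability equals the total weight of the stable profiles, which is positive iff at least one listed profile is stable, and every listed profile carries positive weight. For {\sc IsStabilityProbabilityOne} we output ``no'' as soon as some listed profile makes $\mu$ unstable, and ``yes'' otherwise; this is correct because the weights of the listed profiles sum to $1$, so the stable profiles carry all the weight iff none is omitted, i.e.\ iff every listed profile is stable. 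Each pass visits every profile once with $O(n^2)$ work per profile, so the total time is polynomial in the input size.

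There is no genuine obstacle here: the statement is a routine corollary of Theorem~\ref{thm:joint-stabilityprobability-poly}. The one point worth making carefully is that ``polynomial time'' is meant with respect to the size of the explicitly given distribution, which is precisely what makes enumeration over the support --- and hence both decision procedures above --- feasible.
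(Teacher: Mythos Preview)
Your proposal is correct and matches the paper's approach: the paper states this as an immediate corollary of Theorem~\ref{thm:joint-stabilityprobability-poly} without a separate proof, and your first paragraph captures exactly that derivation. The alternative direct pass over profiles is also fine and indeed just unfolds the proof of Theorem~\ref{thm:joint-stabilityprobability-poly}, so there is nothing to add.
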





		  \noindent
          For the joint probability model,  the problem  {\sc ExistsCertainlyStableMatching} is equivalent to checking whether the intersection of the sets of stable matchings of the different preference profiles is empty or not.

\begin{restatable}{theorem}{thmJointExistsstable}
\label{thm:joint-model-exists-certainly-stable}
For the joint probability model, {\sc Exists\-Certainly\-Stable\-Matching} is NP-complete.
\end{restatable}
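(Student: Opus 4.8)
The plan is to prove membership in NP and then give a polynomial‑time reduction from an NP‑complete problem, using the characterization noted just above the theorem: a certainly stable matching exists iff the sets of stable matchings of the positive‑probability profiles have a common element. Membership in NP is immediate — a matching $\mu$ serves as a certificate, since the support of the distribution is given explicitly (so there are polynomially many profiles), and checking whether $\mu$ is stable in a given profile takes $O(n^2)$ time.

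For hardness I would reduce from a suitable NP‑complete problem, say Exact Cover by 3‑Sets (X3C) or $3$‑SAT. The idea is to build a ``base'' profile $P_0$ whose stable matchings are in bijection with the candidate solutions of the source instance, and then, for each local constraint of the source instance, add one further profile that is stable on exactly those matchings that satisfy that constraint; by the intersection characterization, a certainly stable matching then exists iff some candidate solution satisfies every constraint. Concretely, each ``variable'' of the source instance is modelled by a self‑contained gadget with one internal stable sub‑matching per admissible value (the classical two‑men/two‑women instance with exactly two stable sub‑matchings when a Boolean value suffices, or a longer chain‑type gadget otherwise), with all cross‑gadget preferences ranked below the gadget's own members so that the stable matchings of $P_0$ are precisely the ``product'' matchings $\mu_\sigma$ that pick one sub‑matching per gadget, and with a designated agent per gadget whose $\mu_\sigma$‑partner reveals the gadget's value. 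Each constraint profile uses a few auxiliary agents that in $P_0$ are matched to private partners (so they do not enlarge $P_0$'s stable set) and whose preferences are altered only inside that profile, so as to place exactly one blocking pair in $\mu_\sigma$ precisely when the relevant gadgets are in the forbidden value combination; thus $\mu_\sigma$ is stable in that profile exactly when the corresponding constraint holds.

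The crux, and the part I expect to be hardest to get right, is the gadget design, because whether a given pair blocks the fixed matching $\mu_\sigma$ depends only on the partners of its two endpoints, hence on the states of at most two gadgets; a constraint that is an honest disjunction of three literals therefore cannot be enforced by one additional profile on its own. One must get around this either by having $P_0$ itself compute an intermediate value (e.g.\ OR‑gate‑style sub‑gadgets feeding a ``clause wire'', plus fan‑out gadgets for repeated variables, in the $3$‑SAT encoding) or by choosing the source encoding so that all constraints are binary (in the X3C encoding the candidate solutions are the maps sending each element to a $3$‑set containing it, and the only constraints are the pairwise ``agreement'' conditions inside each $3$‑set, which automatically force an exact cover, with no separate cardinality constraint needed). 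Verifying that $P_0$ has no stable matching other than the intended product matchings, and that each constraint profile neither kills an intended product matching nor admits a spurious one, is the technical heart of the argument; the rest is routine bookkeeping.
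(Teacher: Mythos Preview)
Your overall framework matches the paper's: NP membership via the explicit support, and hardness via a base profile $P_0$ whose stable matchings index the candidate solutions, plus one extra profile per local constraint, so that a common stable matching exists iff all constraints are simultaneously satisfiable. The paper, however, reduces from \textsc{3-Colorability} rather than X3C or 3-SAT. Each vertex $v_i$ gets a $3$-men/$3$-women gadget whose cyclic preferences in $P_0$ yield exactly three stable sub-matchings (the three ``colors''); then for each edge $e=\{v_{i_1},v_{i_2}\}$ and each color $c$ one adds a profile $P_{e,c}$ that differs from $P_0$ only on the two incident gadgets and inserts a single cross-gadget pair $(m_{i_1,1},w_{i_2,1})$ that blocks precisely when both gadgets are in state $c$.

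This choice of source problem is exactly what sidesteps the difficulty you correctly flag: in \textsc{3-Colorability} every constraint is already a binary ``not both equal to $c$'', so a single blocking pair reading two gadgets suffices, and no OR-gates, fan-out, or intermediate wires are needed. Your X3C reformulation also yields binary constraints, but each ``$x_i$ picks $c$ iff $x_j$ picks $c$'' is an implication (a whole row of forbidden pairs) rather than a single forbidden pair, and each element gadget must support one stable state per containing $3$-set, i.e.\ a variable and possibly large number of states. Both are surmountable, but the paper's route avoids them by picking a source problem with a fixed three-element domain and constraints already of the simplest binary shape.
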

\begin{proof} 
The problem is in NP, since computing {\sc StabilityProbability} can be done in polynomial time by Theorem \ref{thm:joint-stabilityprobability-poly}.
The NP-hardness proof is by reduction from 3-Colorability.
Let~$G = (V,E)$ be a graph specifying an instance of 3-Colorability,
where~$V = \{ v_1,\dotsc,v_n \}$.
%
We construct an instance
$I$ of SMULP assuming the joint probability model.

For each vertex~$v_i \in V$, we introduce three men~$m_{i,1},m_{i,2},m_{i,3}$
and three women~$w_{i,1},w_{i,2},w_{i,3}$.
Then, we introduce one preference profile~$P_0$ that ensures that
every certainly stable matching matches---for each~$i \in [n]$---%
each~$m_{i,j}$ to some~$w_{i,j'}$ and, vice versa, each~$w_{i,j}$
to some~$m_{i,j'}$, for~$j,j' \in [3]$.
Moreover, it ensures that for each~$i \in [n]$, exactly one of three
matchings between the men~$m_{i,j}$ and the women~$w_{i,j}$
must be used:
\begin{small}
\[ \begin{array}{r l}
  \mtext{(1)\hspace{-6pt}} &
  \mtext{$m_{i,1}$ is matched to~$w_{i,1}$,
  $m_{i,2}$ is matched to~$w_{i,2}$, and
  $m_{i,3}$ is matched to~$w_{i,3}$;} \\
  \mtext{(2)\hspace{-6pt}} &
  \mtext{$m_{i,1}$ is matched to~$w_{i,2}$,
  $m_{i,2}$ is matched to~$w_{i,3}$, and
  $m_{i,3}$ is matched to~$w_{i,1}$; or} \\
  \mtext{(3)\hspace{-6pt}} &
  \mtext{$m_{i,1}$ is matched to~$w_{i,3}$,
  $m_{i,2}$ is matched to~$w_{i,1}$, and
  $m_{i,3}$ is matched to~$w_{i,2}$;} \\
\end{array} \]
\end{small}

\noindent
Intuitively, choosing one of the matchings~(1)--(3)
for the agents~$m_{i,j},w_{i,j}$ corresponds to coloring
vertex~$v_i$ with one of the three colors in~$\{ 1,2,3 \}$.

Then, for each edge~$e = \{ v_{i_1},v_{i_2} \} \in E$,
and for each color~$c \in \{ 1,2,3 \}$,
we introduce a preference profile~$P_{e,c}$ that
ensures that in any certainly stable matching,
the agents~$m_{i_1,j},w_{i_1,j}$ and
the agents~$m_{i_2,j},w_{i_2,j}$ cannot both
be matched to each other with matching~$(c)$.
We let each preference profile appear
with non-zero probability (e.g., we take a uniform lottery over
all the preference profiles that we introduced).
As a result, any certainly stable matching directly corresponds
to a proper 3-coloring of~$G$.

A detailed description of the preference profiles~$P_0$
and~$P_{e,c}$ and a proof of correctness for this reduction follows.

In~$P_0$, for each~$i \in [n]$,
the preferences for~$m_{i,j},w_{i,j}$ are as follows:

\begin{align*}
  m_{i,1} & : \quad w_{i,1},w_{i,2},w_{i,3},---&w_{i,1} & : \quad m_{i,2},m_{i,3},m_{i,1},--- \\
 m_{i,2} & : \quad w_{i,2},w_{i,3},w_{i,1},---&w_{i,2} & : \quad m_{i,3},m_{i,1},m_{i,2},---\\
  m_{i,3} & : \quad w_{i,3},w_{i,1},w_{i,2},---&  w_{i,3} & : \quad m_{i,1},m_{i,2},m_{i,3},---
  \end{align*}

Next, we continue with the preference profiles~$P_{e,c}$.
Take an arbitrary~$e = \{ v_{i_1},v_{i_2} \} \in E$
and an arbitrary~$c \in \{ 1,2,3 \}$.
In~$P_{e,c}$, the preferences for~$m_{i,j},w_{i,j}$
for each~$i \in [n] \setminus \{ i_1,i_2 \}$ are exactly the
same as in~$P_0$.
Only the preferences for~$m_{i_1,j},w_{i_1,j}$
and~$m_{i_2,j},w_{i_2,j}$ differ from~$P_0$;
namely, we construct these preferences as follows.

For~$m_{i_1,j},w_{i_1,j}$, we start with preferences that
(i)~for all~$m_{i_1,j}$ have~$w_{i_1,1},w_{i_1,2},w_{i_1,3}$ as top
three choices,
(ii)~for all~$w_{i_1,j}$ have~$m_{i_1,1},m_{i_1,2},m_{i_1,3}$ as top
three choices,
(iii)~admit only matchings~(1),~(2), and~(3) as stable matchings
between the agents~$m_{i_1,j},w_{i_1,j}$,
and (iv)~for the men~$m_{i_1,j}$ the matching~$(c)$ is the
worst option among the matchings~(1),~(2), and~(3).
Similarly, for~$m_{i_2,j},w_{i_2,j}$, we start with preferences that
satisfy conditions~(i),~(ii) and~(iii), and additionally satisfy
the condition~(iv$'$) that for the women~$w_{i_2,j}$ the matching~$(c)$ is the
worst option among the matchings~(1),~(2), and~(3).
Then, we modify the preferences for~$m_{i_1,1}$ and~$w_{i_2,1}$ slightly.
For~$m_{i_1,1}$, we insert~$w_{i_2,1}$ between his second and third
preferred woman.
Similarly, for~$w_{i_2,1}$, we insert~$m_{i_1,1}$ between her second and
third preferred man.
As a result,~$m_{i_1,1}$ and~$w_{i_2,1}$ form a blocking pair in this
preference profile if both the agents~$m_{i_1,j},w_{i_1,j}$
and the agents~$m_{i_2,j},w_{i_2,j}$ are matched to each other
using matching~$(c)$---%
and not if either set of agents is matched to each other
using some other matching~$(c')$.


For example, consider~$e = \{ v_{i_1},v_{i_2} \}$
and~$c = 2$.
The preferences for the agents~$m_{i_1,j},w_{i_1,j}$
and~$m_{i_2,j},w_{i_2,j}$ in the preference profile~$P_{e,c}$
are as follows:
%
\begin{small}
\begin{align*}
  m_{i_1,1} & : \quad w_{i_1,1},w_{i_1,3},\bm{w_{i_2,1}},w_{i_1,2},---&m_{i_2,1} & : \quad w_{i_2,2},w_{i_2,3},w_{i_2,1},---\\
  m_{i_1,2} & : \quad w_{i_1,2},w_{i_1,1},w_{i_1,3},---& m_{i_2,2} & : \quad w_{i_2,3},w_{i_2,1},w_{i_2,2},---\\
  m_{i_1,3} & : \quad w_{i_1,3},w_{i_1,2},w_{i_1,1},---&  m_{i_2,3} & : \quad w_{i_2,1},w_{i_2,2},w_{i_2,3},---\\
  w_{i_1,1} & : \quad m_{i_1,3},m_{i_1,2},m_{i_1,1},---&  w_{i_2,1} & : \quad m_{i_2,1},m_{i_2,2},\bm{m_{i_1,1}},m_{i_2,3},---\\
  w_{i_1,2} & : \quad m_{i_1,1},m_{i_1,3},m_{i_1,2},---&  w_{i_2,2} & : \quad m_{i_2,2},m_{i_2,3},m_{i_2,1},---\\
  w_{i_1,3} & : \quad m_{i_1,2},m_{i_1,1},m_{i_1,3},---&  w_{i_2,3} & : \quad m_{i_2,3},m_{i_2,1},m_{i_2,2},---
  \end{align*}%
\end{small}%
We argue that~$G$ has a proper 3-coloring if and
only if there is a certainly stable matching for the probability distribution
over preference profiles that we constructed.

    $(\Rightarrow)$
    Firstly, suppose that~$G$ has a proper 3-coloring,
    say~$\chi : V \rightarrow \{ 1,2,3 \}$.
    We can then construct a certainly stable matching
    as follows.
    For each~$i \in [n]$, we
    match the agents~$m_{i,j},w_{i,j}$ to each other using
    matching~$(c_i)$, where~$c_i = \chi(v_i)$.
    Clearly, this matching is stable for~$P_0$.
    Moreover, because~$\chi$ is a proper 3-coloring of~$G$,
    it is straightforward to verify that this matching is also
    stable for each~$P_{e,c}$.

    $(\Leftarrow)$
    Conversely, suppose that there is a certainly stable matching.
    We know that in this matching, each man~$m_{i,j}$ must
    be matched to some woman~$w_{i,j'}$, and vice versa,
    each woman~$w_{i,j}$ must be matched to some man~$m_{i,j'}$.
    If this were not the case, the matching would not be stable
    for~$P_0$, and thus not certainly stable.
    Moreover, by a similar argument, we know that for
    each~$i \in [n]$, the matching between the men~$m_{i,j}$
    and the women~$w_{i,j}$ must be one of
    the matchings~(1),~(2), or~(3).
    We can then construct a 3-coloring~$\chi : V \rightarrow \{ 1,2,3 \}$
    as follows.
    For each~$i \in [n]$, we let~$\chi(v_i) = c_i$,
    where~$(c_i)$ is the matching used in the certainly stable matching
    to match the men~$m_{i,j}$ and the women~$w_{i,j}$ to each other.

    We argue that~$\chi$ is a proper 3-coloring of~$G$.
    Suppose that this is not the case,
    that is, that there is some~$e = \{ v_{i_1},v_{i_2} \}$
    such that~$\chi(v_{i_1}) = \chi(v_{i_2}) = c$.
    Now consider the preference profile~$P_{e,c}$.
    By construction of~$\chi$, we know that in the certainly stable
    matching, both the agents~$m_{i_1,j},w_{i_1,j}$ and
    the agents~$m_{i_2,j},w_{i_2,j}$ are matched to each other
    using matching~$(c)$.
    However, then by construction of~$P_{e,c}$,%
    ~$m_{i_1,1}$ and~$w_{i_2,1}$ form a blocking pair in~$P_{e,c}$.
    This is a contradiction with our assumption that the matching
    we considered is certainly stable.
    From this, we can conclude that~$\chi$ is a proper 3-coloring of~$G$.
\end{proof}

\noindent
By modifying the proof of Theorem~\ref{thm:joint-model-exists-certainly-stable}, the following can also be proved.

\begin{restatable}{corollary}{corVizing}\label{cor:vizing}
For the joint probability model, {\sc Exists\-Certainly\-Stable\-Matching} is NP-complete, even when there are only 16 preference profiles in the lottery.
\end{restatable}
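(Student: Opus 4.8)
The plan is to reuse the reduction from the proof of Theorem~\ref{thm:joint-model-exists-certainly-stable} almost verbatim, but to reduce from a restricted version of 3-Colorability and to \emph{bundle} many of the edge-constraint profiles $P_{e,c}$ into a single profile. First I would invoke the fact that 3-Colorability remains NP-complete on graphs of maximum degree~$4$ (indeed on $4$-regular planar graphs). Given such an instance $G = (V,E)$ with maximum degree at most~$4$, Vizing's theorem guarantees a proper edge coloring of $G$ with at most $5$ colors, i.e.\ a partition $E = M_1 \cup \dotsm \cup M_5$ of the edge set into (at most) five matchings, and such a coloring can be computed in polynomial time.

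Next I would build the same agents $m_{i,j}, w_{i,j}$ (for $i \in [n]$, $j \in [3]$) and the same profile $P_0$ as before, which forces every certainly stable matching to match the agents of each vertex gadget among themselves using one of the three matchings~(1),~(2),~(3), i.e.\ to encode a coloring $\chi$ of $V$. Then, instead of one profile per edge and color, I would introduce, for each matching $M_k$ and each color $c \in \{ 1,2,3 \}$, a single profile $P_{k,c}$ defined as follows: for every vertex $v_i$ incident to an edge $e$ of $M_k$, use the modified preferences that the original proof assigns to $m_{i,j}, w_{i,j}$ in $P_{e,c}$ (for the unique such $e$); for every other vertex, keep the preferences of $P_0$. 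Because $M_k$ is a matching, each vertex is incident to at most one edge of $M_k$, so this is well defined, and the gadgets attached to distinct edges of $M_k$ act on pairwise disjoint sets of agents. Finally I would take a uniform lottery over the $1 + 5 \cdot 3 = 16$ profiles $P_0$ and $P_{k,c}$. Membership in NP is again immediate from Theorem~\ref{thm:joint-stabilityprobability-poly}.

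For correctness, the forward direction is unchanged: a proper 3-coloring $\chi$ yields the rotation-per-vertex matching, which is stable for $P_0$ and, edge by edge exactly as before, stable for each $P_{k,c}$. For the converse, a certainly stable matching is stable for $P_0$, hence encodes a coloring $\chi$; if some edge $e = \{ v_{i_1},v_{i_2} \}$ had $\chi(v_{i_1}) = \chi(v_{i_2}) = c$, then picking the class $M_k$ containing $e$ and inspecting $P_{k,c}$, the pair $m_{i_1,1}, w_{i_2,1}$ blocks, exactly as in Theorem~\ref{thm:joint-model-exists-certainly-stable}.

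The one point needing care, and the only real obstacle, is checking that merging several edge gadgets into one profile $P_{k,c}$ creates no \emph{unintended} blocking pairs. I would verify that, since distinct edges of $M_k$ share no vertices, the only cross-gadget preference edits in the original construction (inserting $w_{i_2,1}$ into $m_{i_1,1}$'s list and $m_{i_1,1}$ into $w_{i_2,1}$'s list) stay within a single edge gadget, so no agent of one gadget ranks any agent of another gadget near the top, and the filler tails of the lists cannot block for the same reason they do not block in $P_0$. Once this locality observation is in place, the remaining verification is identical to that of Theorem~\ref{thm:joint-model-exists-certainly-stable}.
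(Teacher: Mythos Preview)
Your proposal is correct and follows essentially the same approach as the paper: restrict to 3-Colorability on graphs of maximum degree~$4$, apply Vizing's Theorem to partition $E$ into at most five matchings (edge-color classes), and merge all $P_{e,c}$ with $e$ in the same class into a single profile, yielding $1 + 5 \cdot 3 = 16$ profiles. Your explicit locality argument for why merging creates no unintended blocking pairs is a welcome addition that the paper leaves implicit.
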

\begin{proof} 
We show this by modifying the proof of
Theorem~\ref{thm:joint-model-exists-certainly-stable}.
We know that 3-Colorability is NP-hard already when restricted
to graphs of degree~4 \cite{Dailey80}.
We use the reduction in the proof of
Theorem~\ref{thm:joint-model-exists-certainly-stable}, and we
assume that the given graph~$G$ has degree~4.
Then, by Vizing's Theorem \cite{Vizing64},
we know that we can give a proper edge coloring of~$G$
that uses at most~5 colors.
Moreover, we can find such an edge coloring in polynomial time.
Then, since in the proof of
Theorem~\ref{thm:joint-model-exists-certainly-stable},
in each preference profile~$P_{e,c}$ with~$e = \{ v_{i_1},v_{i_2} \}$,
only the preferences for
the agents~$m_{i_1,j},w_{i_1,j},m_{i_2,j},w_{i_2,j}$ differ
from~$P_0$,
we can, for each color~$c \in \{ 1,2,3 \}$,
combine the preference profiles~$P_{e,c}$ for all
edges~$e$ that are colored with the same color.
This results in only~16 preference profiles:~$P_0$,
and a preference profile for each of the~5 edge colors
and each of the~3 vertex colors.
\end{proof}

\section{Future work}

First we note that we left open two outstanding questions, as described in Table \ref{table:summary:uncertainmatch}. In this paper we focused on the problem of computing a matching with the highest stability probability. However, a similarly reasonable goal could be to minimize the expected number of blocking pairs. It would also be interesting to investigate some further realistic probability models, such as the situation when the candidates are ranked according to some noisy scores (like the SAT scores in the US college admissions). This would be a special case of the joint probability model that may turn out to be easier to solve. Finally, in a follow-up paper we are planning to investigate another probabilistic model that is based on independent pairwise comparisons.



\paragraph{Acknowledgments.}
Bir\'o is supported by the Hungarian Academy of Sciences under its Momentum Programme (LP2016-3) and the Hungarian Scientific Research Fund, OTKA, Grant No.\ K108673. Rastegari was supported EPSRC grant EP/K010042/1 at the time of the submission. The authors gratefully acknowledge the support from European Cooperation in Science and Technology (COST) action IC1205.
Serge Gaspers is the recipient of an Australian Research Council (ARC) Future Fellowship (FT140100048) and acknowledges support under the ARC's Discovery Projects funding scheme (DP150101134). NICTA is funded by the Australian Government through the Department of Communications and the ARC through the ICT Centre of Excellence Program.
%

\renewcommand*{\bibfont}{\small}
\bibliography{abb,bibfile}
%

\newpage

\end{document}